\newcommand{\R}{\mathbb R}
\newcommand{\D}{\text{d}}
\newcommand{\db}{\bar\partial}
\newcommand{\sgn}{\text{sgn}}
\newcommand{\ab}{$(\alpha,\beta)$}
\newcommand{\note}[1]{{\color{black}{#1}}}
 \newtheorem{prop}{Proposition}
\newtheorem{cor}[prop]{Corollary}
\newtheorem{theor}[prop]{Theorem}
\newtheorem{defi}[prop]{Definition}
\newcommand{\nocontentsline}[3]{}
\newcommand{\tocless}[2]{\bgroup\let\addcontentsline=\nocontentsline#1{#2}\egroup}
\begin{document}
\title{Finsler Gravitational Waves of \ab-Type and their Observational Signature}

\author{Sjors Heefer}
\email{s.j.heefer@tue.nl}
\affiliation{Department of Mathematics and Computer Science, Eindhoven University of Technology, Eindhoven, The Netherlands}
\author{Andrea Fuster}
\email{a.fuster@tue.nl}
\affiliation{Department  of Mathematics and Computer Science, Eindhoven University of Technology, Eindhoven, The Netherlands}

\begin{abstract}
We introduce a new class of $(\alpha,\beta)$-type exact solutions in Finsler gravity closely related to the well-known pp-waves in general relativity. Our class contains most of the exact solutions currently known in the literature as special cases. The linearized versions of these solutions may be interpretted as Finslerian gravitational waves, and we investigate the physical effect of such waves. More precisely, we compute the Finslerian correction to the radar distance along an interferometer arm at the moment a Finslerian gravitational wave passes a detector. We come to the remarkable conclusion that the effect of a Finslerian gravitational wave on an interferometer is indistinguishable from that of standard gravitational wave in general relativity. Along the way we also physically motivate a modification of the Randers metric and prove that it has some very interesting properties.
\end{abstract}

\maketitle

\newpage
\tableofcontents



\newpage
\color{black}
\section{Introduction}

Even though in the general theory of relativity (GR) the geometry of spacetime is modelled by a \mbox{(pseudo-)Riemannian} metric of Lorentzian signature, 
%
%
there is no clear physical principle, nor experimental evidence, that tells us that this spacetime geometry should necessarily be \mbox{(pseudo-)Riemannian}. In fact, as suggested already in 1985 by Tavakol and Van den Bergh \cite{TAVAKOL198523, Tavakol_1986, Tavakol2009}, the axiomatic approach by Ehlers, Pirani and Schild (EPS) \cite{Ehlers2012} is compatible with Finsler geometry, a natural extension of \mbox{(pseudo-)Riemannian} geometry. This was originally overlooked due to too restrictive differentiability assumptions, as recently pointed out in \cite{Lammerzahl:2018lhw} and then worked out in detail in \cite{Bernal_2020}. Other axiomatic approaches also allow for types of geometry more general than the type used in GR, see e.g. \cite{Bubuianu2018}. This indicates that such types of geometries should not a priori be excluded from our theories and motivates the study of extensions of general relativity based on more general spacetime geometries. \\

In this regard Finsler geometry is the natural candidate as it provides the most general geometric framework that is still compatible with the clock postulate in the usual sense, namely that the proper time interval measured by an observer between two events can be defined as the length of its worldline connecting these events, in this case the Finslerian length rather than the \mbox{(pseudo-)Riemannian} length. We remark that Weyl geometry, another generalization of Lorentzian geometry, is also compatible with the clock postulate, but in that case the definition of proper time has to be revised \cite{Perlick1987}. \\ 

Further motivation for the study of Finsler spacetime geometry comes from quantum gravity phenomenology \cite{Addazi_2022}. 
Inspired by various approaches to quantum gravity, a generic feature of phenomenological or effective quantum gravity models is the presence of Planck-scale modified dispersion relations (MDR), related to departure from (local) Lorentz symmetry \cite{Addazi_2022, Amelino_Camelia_2013, Mattingly_2005}, which may manifest either in the sense of Lorentz invariance violation (LIV) or in the sense of deformed Lorentz symmetry. It turns out that such MDRs generically induce a Finsler geometry on spacetime \cite{Girelli:2006fw}. The mathematical details of this were investigated in \cite{Raetzel:2010je,Rodrigues:2022mfj}; see e.g. \cite{Amelino-Camelia:2014rga,Lobo_2017, Letizia:2016lew} for applications to specific quantum gravity phenomenology models. \\

Here we consider the (action-based) approach to Finsler gravity outlined in \cite{Pfeifer:2011xi,Hohmann_2019}. Structurally the theory is completely analogous to general relativity, but Einstein's field equation is replaced by Pfeifer and Wohlfarth's field equation. For \mbox{(pseudo-)Riemannian} spacetimes the latter reduces to the former. Although any solution to the field equations of GR is a solution in Finsler gravity, not many exact, properly Finslerian solutions are known as of yet. To the best of our knowledge the only ones currently known in the literature are the ($m$-Kropina type) Finsler pp-waves \cite{Fuster:2015tua} and their generalization as Very General Relativity (VGR)  spacetimes \cite{Fuster:2018djw}, and the Randers pp-waves \cite{Heefer_2021}.\\

Here we introduce a large class of exact vacuum solutions that contains most of the aforementioned solutions as special cases, the only exception being those solutions in \cite{Fuster:2018djw} that are not of pp-wave type. Namely, we prove that any Finsler metric constructed from a \mbox{(pseudo-)Riemannian} metric $\alpha$ and a 1-form $\beta$ that is covariantly constant with respect to $\alpha$, is an exact vacuum solution in Finsler gravity if $\alpha$ is a vacuum solution in general relativity. We classify all such solutions, leading to two possibilities: either $\alpha$ is flat Minkowski space, or $\alpha$ is a pp-wave. Our solutions are \ab-metrics of Berwald type.\\

The natural question that arises is whether and how such spacetimes can be physically distinguished from their general relativistic counterparts. To answer this question we consider the \note{linearized} versions of our exact solutions, which may be interpretted as Finslerian gravitational waves, and we study their physical effect. More precisely, we ask the question what would be observed in an interferometer experiment when such a Finslerian gravitational wave would pass the earth, and what would be the difference with a classical general relativistic gravitational wave. The relevant observable measured in interferometer experiments is essentially the radar distance, \note{so we compute this radar distance for our Finslerian gravitational waves, reproducing in the appropriate limit the radar distance formula for a standard gravitational wave in GR \cite{Rakhmanov_2009}. } Although at first sight the expression for the Finsler radar length looks different from the corresponding expression GR, we show that this is nothing but a coordinate artifact. Remarkably, when the two expressions are interpreted correctly in terms of observable quantities, it becomes clear that there is in fact no observational difference between the Finsler and GR case, at least as far as radar distance measurements are concerned. We discuss the significance of this. To the best of our knowledge this is the first time an explicit expression for the Finslerian Radar length has been obtained in the case of finite spacetime separations, and as such our work may be seen as a proof of concept. In contrast, the radar length for infinitesimal separations has been studied in \cite{Pfeifer_2014,Gurlebeck:2018nme}.  \\



We do point out that our results rely on the assumption that the amplitude of the gravitational wave, as well as the parameter $\lambda$ that characterized the departure from (pseudo)-Riemannian geometry, are sufficiently small, so that a certain perturbative expansion is valid. This nevertheless seems physically justified. We argue in a heuristic manner that up to first order in $\lambda$, any physically viable \ab-metric can be equivalently described by a slightly modified version of a standard Randers metric.\\

Indeed, the causal structure of the standard Randers metric does not in general have a straightforward physical interpretation. We therefore propose to modify the Randers metric slightly, only changing some relative signs in different subsets of the tangent bundle. We then prove that these modified Randers metrics have the nice property that their causal structure is completely equivalent to the causal structure of some auxiliary \mbox{(pseudo-)Riemannian} metric. This analysis is done in full generality, i.e. not just for our exact solutions. In the special case, however, that the defining 1-form of the Randers metric is covariantly constant (as is the case for our solutions) we prove that not only the causal structure, but also the affine structure of the Finsler metric coincides with that of the auxilliary (pseudo)-Riemannian metric, i.e. the timelike, spacelike and null geodesics of the Finsler metric can be understood, respectively, as the timelike, spacelike and null geodesics of the auxiliary (pseudo)-Riemannian metric. This leads to the particularly nice property that the existence of radar neighborhoods is guaranteed \cite{Perlick2008}, i.e. that given an observer and any event in spacetime, there is (at least locally) exactly one future pointing light ray and one past pointing light ray that connect the event to the worldline of the observer. This is of essential importance in our work, because without this property the notion of radar distance would not even make sense.

\subsection{Structure of this article}

The paper is organized as follows. We begin in Section \ref{sec:Finsler_gravity} with a discussion of Finsler geometry and the core ideas behind Finsler gravity. Then in Section \ref{sec:ab_metrics} we introduce $(\alpha,\beta)$-metrics, and in particular Randers metrics and discuss their relevance to Finsler gravity. We then introduce our new  solutions to the field equations and show that after linearization these solutions may be interpretted as Finslerian gravitational waves. Next, in section \ref{sec:Randers} we propose our modification of the standard Randers metric and prove that it has very satisfactory properties with respect to its causal structure, affine structure, Lorentzian signature, etc. \note{Section \ref{sec:radar_distance} is devoted to the calculation of the radar distance at the moment a Finsler gravitational wave passes, say, the Earth. We clearly point out the differences with the general relativity case.} We conclude in section \ref{sec:discussion}.


\section{Finsler Gravity}
\label{sec:Finsler_gravity}


\note{In this section we recall the basic definitions in Finsler geometry and the basic ideas that underlie Finsler gravity. We will be brief and to the point; for a slightly more detailed introduction we refer the reader to our previous article \cite{Heefer_2021}. We start with some notational remarks. Throughout this article will use induced local coordinates on the tangent bundle of a smooth manifold that can be introduced in the following way. Given a chart $\phi:U\subset M\to \R^n$ on a smooth manifold $M$, we identify any $p\in U$ with its image $\phi(p) = (x^0,\dots, x^{n-1})\in\phi(U)\subset\R^n$ under $\phi$. For $p\in U$ and $Y_p\in T_pM$, where $T_pM$ is the tangent space to $M$ at $p$, we can express $Y_p = y^\mu\partial_\mu|_p$ in terms of the holonomic basis $\partial_\mu \equiv \partial/\partial x^\mu$ of $T_pM$. This decomposition induces local coordinates $(x,y)\equiv (x^0,\dots, x^{n-1}, y^0, \dots, y^{n-1})\in\phi(U)\times\R^{n}$ on the tangent bundle $TM$ (and any open submanifold thereof). We will thus generally represent any point $(p,Y_p)\in TM$ by the tuple $(x,y)$.
The holonomic basis vectors, corresponding to these coordinates, of the tangent space $T_{(x,y)}TM$ to $TM$ at $(x,y)$ will be denoted by $\partial_\mu=\partial/\partial x^\mu$ and $\bar{\partial}_\mu=\partial/\partial y^\mu$. Throughout the article we will also assume that the spacetime dimension is $1+3$, and we will use the signature convention $(-,+,+,+)$.}


\subsection{Finsler spacetime geometry}\label{sec:FinslerSpacetimes}

 \note{A Finsler spacetime is a triple $(M,\mathcal A,F)$, where $M$ is a smooth manifold, $\mathcal A$ is a conic subbundle\footnote{\note{By a conic subbundle with non-empty fibers we mean an open subset $\mathcal A\subset TM\setminus 0$ such that $(x,\lambda y)\in\mathcal A$ for any $(x,y)\in\mathcal A$ and any $\lambda>0$, and such that $\pi(\mathcal A) = M$, where $\pi:TM\to M$ is the canonical projection of the tangent bundle.}} of $TM\setminus 0$ with non-empty fibers, and $F$, the so-called Finsler metric, is a map $F:\mathcal A\to \R$ that satisfies the following axioms:
\begin{itemize}
	\item $F$ is (positively) homogeneous of degree one with respect to $y$:
	\begin{align}
	F(x,\lambda y) =\lambda F(x, y)\,,\quad \forall \lambda>0\,;
	\end{align}
	\item The \textit{fundamental tensor}, with components $g_{\mu\nu} = \db_\mu\db_\nu \left(\frac{1}{2}F^2\right)$, has Lorentzian signature on $\mathcal{A}$.
\end{itemize}

The definition of a Finsler spacetime given above is a very \textit{weak} one in the sense that most other definitions appearing in the literature are more restrictive (see e.g. \cite{Beem, Asanov, Pfeifer:2011tk, Pfeifer:2011xi,Lammerzahl:2012kw, Javaloyes2014-1, Javaloyes2014-2}). Accordingly, our definition allows for a lot of instances, many of which will not be physically viable. This is, in our opinion, a feature rather than a bug, as most of the results in this article can be proven without further restrictions. It should be understood, however, that in order to guarantee that a viable physical interpretation is possible, the geometry should be subjected to more stringent requirements.}\\

\note{Given a Finsler metric $F$, the length of a curve $\gamma:\lambda\mapsto \gamma(\lambda)$ on $M$ can be defined as
\begin{align}
L(\gamma)=\int F(\dot{\gamma})\,\D \lambda = \int  F(x,\dot{x})\,\D \lambda,\qquad \dot{\gamma}=\frac{d\gamma}{d\lambda},
\end{align}
which, due to homogeneity, is invariant under (orientation-preserving) reparameterization. %
%
It follows by Euler's theorem for homogeneous functions that 
\begin{align}
g_{\mu\nu}(x,y)y^\mu y^\nu = F(x,y)^2,
\end{align}
and hence the length of curves is formally identical to the length in Riemannian geometry, the difference being that now the `metric tensor' may depend not only on position $x$ but on the direction $y$ as well. In fact if $g_{\mu\nu} = g_{\mu\nu}(x)$, or, equivalently, if $F^2$ is quadratic in $y$, then Finsler spacetime geometry reduces to classical Lorentzian geometry.\\

The fundamental theorem of Riemannian geometry generalizes to what's sometimes called the fundamental lemma of Finsler geometry. It states that any Finsler metric admits a unique homogeneous (nonlinear) connection on the subbundle $\mathcal A\subset TM\setminus 0$, characterized by its connection coefficents $N^\rho_\mu$, that is torsion-free, $\bar\partial_\nu N^\rho_\mu = \bar\partial_\mu N^\rho_\nu$, and metric-compatible, $\delta_\mu F^2 =0$, where $\delta_\mu \equiv \partial_\mu-N^\rho_\mu\bar\partial_\rho$ is the \textit{horizontal derivative} induced by the connection. This connection is usually referred to as the Cartan nonlinear connection or the canonical nonlinear connection and its connection coefficients are given by
\begin{align}
N^\rho_\mu(x,y) = \frac{1}{4}\bar{\partial}_\mu \bigg(g^{\rho \sigma}\big(y^\nu\partial_\nu\bar{\partial}_\sigma F^2 - \partial_\sigma F^2\big)\bigg)\,
\end{align}
where $g^{\rho \sigma}$ is the matrix inverse of the fundamental tensor $g_{\mu\nu}$. Parallel transport of a vector $V$ along a curve $\gamma$ is then characterized by the parallel transport equation\footnote{Note that the parallel transport map is in general nonlinear. Some authors (e.g. \cite{Bucataru}) choose to define parallel transport differently, namely by requiring a priori that parallel transport should be linear, which leads to the alternative parallel transport equation $\dot V^i + N^i_j(\gamma,\dot\gamma) V^i=0$. This approach, however, seems unnatural to us. Here we follow e.g. \cite{Szilasi}, where parallel transport of a vector is defined via its unique horizontal lift along a given curve. In this case parallel transport is linear if and only if the connection is linear.}
\begin{align}
\label{eq:nonlinear.parallel.transport.eq}
\dot V^\mu + N^\mu_\nu(\gamma,V)\dot \gamma^\mu = 0\,,
\end{align}
and consequently, autoparallels are those curves that satisfy
\begin{align}
\label{eq:nonlinear.geodesic.eq}
\ddot \gamma^\mu + N^\mu_\nu(\gamma,\dot \gamma)\dot \gamma^\nu = 0\,.
\end{align}
The curvature tensor, Finsler Ricci scalar and the Finsler Ricci tensor 
of $(M,F)$ are defined, respectively, as
\begin{align}\label{eq:definition_curvatures}
R^\rho{}_{\mu\nu}(x,y) =  \delta_\mu N^\rho_\nu(x,y)-\delta_\nu N^\rho_\mu(x,y),\quad \text{Ric}(x,y) = R^\rho{}_{\rho\mu}(x,y)y^\mu,\quad R_{\mu\nu}(x,y) = \frac{1}{2}\db_\mu \db_\nu\text{Ric}.
\end{align}

\subsection{Berwald spacetimes}\label{sec:Berwald}

A Berwald spacetime is a Finsler spacetime for which the Cartan nonlinear connection reduces to a \textit{linear} connection\footnote{See \cite{Szilasi2011} for an overview of the various equivalent characterizations of Berwald spaces and \cite{Pfeifer:2019tyy} for a more recent equivalent characterization.},  which is the case if and only if the connection coefficients are of the form
\begin{align}
N^\rho_\mu(x,y) = \Gamma^\rho_{\mu\nu}(x)y^\nu
\end{align}
for a set of functions $\Gamma^\rho_{\mu\nu}:M\to\R$. If so, the functions $\Gamma^\rho_{\mu\nu}$ can be identified as the Christoffel symbols of a torsion-free affine connection on $M$. We will refer to this affine connection as the associated affine connection, or simply \textit{the} affine connection on the Berwald spacetime. 
Since any (pseudo)-Riemannian spacetime is of Berwald type (with $\Gamma^\rho_{\mu\nu}$ given by the Levi-Civita connection), we have the following inclusions:
\begin{align*}
   &\text{(pseudo-)Riemannian} \subset \text{Berwald} \subset \text{Finsler}. 
\end{align*}
The parallel transport \eqref{eq:nonlinear.parallel.transport.eq} and autoparallel equations \eqref{eq:nonlinear.geodesic.eq} on a Berwald space reduce to the familiar equations
\begin{align}
\dot V^i + \Gamma^i_{jk}(\gamma)\dot \gamma^j V^k = 0, \qquad \ddot \gamma^i + \Gamma^i_{jk}(\gamma)\dot \gamma^j \dot \gamma^k = 0
\end{align}
in terms of the Christoffel symbols. The curvature tensors \eqref{eq:definition_curvatures} of a Berwald space can be written as
\begin{align}
\label{eq:symm_ricci}
R^j{}_{kl} = \bar R_i{}^j{}_{kl}(x)y^i, \qquad \text{Ric} = \bar R_{ij}(x)y^i y^j, \qquad R_{ij} = \frac{1}{2}\left(\bar R_{ij}(x) + \bar R_{ji}(x)\right),
\end{align}
in terms of the Riemann tensor $\bar R_l{}^i{}_{jk}= 2\partial_{[j} \Gamma^i_{k]l} + 2\Gamma^i_{m[j}\Gamma^m_{k]l}$ and Ricci tensor $\bar R_{lk} = \bar R_l{}^i{}_{ik}$ of the associated affine connection, where we have used the notation $T_{[ij]} = \frac{1}{2}\left(T_{ij}-T_{ji}\right)$ and $T_{(ij)} = \frac{1}{2}\left(T_{ij}+T_{ji}\right)$ for \mbox{(anti-)}symmetrization. In fact, for positive definite Finsler spaces, it follows by Szabo's metrization theorem that $R_{ij} = \frac{1}{2}\left(\bar R_{ij} + \bar R_{ji}\right) = \bar R_{ij} $, but this does not extend to Finsler spacetimes in general \cite{Fuster_2020}.}

\subsection{A note about causal structure and physical interpretation}
\label{sec:Finsler_spacetimes_interpretation}

Given a Finsler spacetime geometry, it is natural to postulate, in analogy with GR, that matter travels along timelike geodesics and light travels on null geodesics. The generalization of the notion of \textit{null} direction is mathematically straightforward. A vector $y^u$ at a point $x^\mu$ is said to be null (or lightlike)  if $F(x,y)^2 = g_{\mu\nu}(x,y)y^\mu y^\nu=0$. However, the structure of the light cone, composed of such null vectors, may be non-trivial. In GR it is always the case that the light cone separates the tangent space at each point into three connected components, that we may interpret as forward-pointing timelike vectors, backward-pointing timelike vectors, and spacelike vectors, respectively. It is then a consequence that a timelike vector is one that has positive (or negative, depending on the convention) Riemannian norm. For a generic Finsler spacetime geometry these properties of the lightcone structure are by no means guaranteed and as such it is not obvious in general how to even define what one means a by timelike vector. It certainly does not suffice to define them as positive length vectors. We do not discuss this issue any further in its full generality here. Only in the specific case of the Randers metric, in Section \ref{sec:Randers}, will we dive into the details. We argue that the causal structure of the standard Randers metric does not have a straightforward physical interpretation, but we prove that, by modifying the definition only slightly, the causal structure of such a modified Randers metric has exactly the desirable properties mentioned above in the case of GR, allowing for a straightforward physical interpretation. This will be exploited in Section \ref{sec:radar_distance}, where we compute the radar distance for a Finslerian gravitational wave of (modified) Randers type passing an interferometer.\\

It is worth mentioning that in the ideal case the (forward and backward) timelike cones should be contained in the subbundle $\mathcal A$. This statement is essentially the condition that geometry is well-defined for all timelike infinitesimal spacetime separations. This property is satisfied by our modified Randers metrics (up to a set of measure zero). It can be argued that it is not strictly necessary for spacelike vectors to be contained in $\mathcal A$, as it would not be possible, not even in principle, to perform any physical experiment that probes such directions. Whether the lightcone should be contained in $\mathcal A$ is a more delicate question, which we will not further explore here.


\subsection{The field equations}

In the context of Finsler gravity, arguably the simplest and cleanest proposal for a vacuum field equation was the one by Rutz \cite{Rutz}. The Rutz equation,  Ric $= 0$, can be derived from the geodesic deviation equation in complete analogy to the way Einstein's vacuum field equation, $R_{\mu\nu}=0$ (to which it reduces in the classical \mbox{(pseudo-)Riemannian} setting), can be derived by considering geodesic deviation. \\

However, it turns out that Rutz's equation is \textit{not} variational, i.e. it cannot be obtained by extremizing an action functional. In fact, its variational completion (i.e. the variational equation that is \textit{as close as possible to it}, in a well-defined sense \cite{Voicu_2015}) turns out to be the field equation that was proposed by Pfeifer and Wohlfarth in \cite{Pfeifer:2011xi} using a Finsler extension of the Einstein-Hilbert action \cite{Hohmann_2019}. This is again in complete analogy to the situation in GR, where the vacuum Einstein equation in the form $R_{\mu\nu} - \frac{1}{2}g_{\mu\nu}R = 0$ is also precisely the variational completion of the equation $R_{\mu\nu}=0$ \cite{Voicu_2015}. While in the GR case the completed equation happens to be equivalent to the former, this is not true any longer in the Finsler setting.\\

Although several other proposals have been made as well \cite{Horvath1950,Horvath1952, Ikeda1981, Asanov1983, Chang:2009pa,Kouretsis:2008ha,Stavrinos2014,Voicu:2009wi,Minguzzi:2014fxa}, we consider the Pfeifer-Wohlfarth equation\footnote{In the positive definite setting a similar field equation has been obtained by Chen and Shen \cite{Chen-Shen}.} \cite{Pfeifer:2011xi} to be by far the most promising, and from here onwards we will refer to it simply as \textit{the} vacuum field equation in Finsler gravity. We do not show the field equation in full generality here, as its general form is not required for our present purposes. In the case of Berwald spacetimes it can be expressed relatively simply as \cite{Fuster:2018djw}
\begin{align}\label{eq:BEFEs}
	\left(  F^2 g^{\mu\nu} - 3 y^\mu y^\nu \right) R_{\mu\nu}  = 0\,,
\end{align}
where $R_{\mu\nu}$ is the Finsler Ricci tensor and since we are in a Berwald setting, $R_{\mu\nu} = R_{\mu\nu}(x)$ only depends on $x$. Clearly the vanishing of the Finsler Ricci tensor is a sufficient condition for a Berwald spacetime to be a solution to Eq.\,\eqref{eq:BEFEs}. \note{In many cases of interest (but not always) it is a necessary condition as well. For instance, for Randers metrics of Berwald type it is known that the field equation \eqref{eq:BEFEs} is equivalent to  $R_{\mu\nu}=0$ \cite{Heefer_2021}, and similar results have been obtained for Finsler metrics satisfying strict smoothness requirements  \cite{javaloyes2023einsteinhilbertpalatini}. There exist Finsler metrics, however, that have a non-vanishing Finsler Ricci tensor, yet for which \eqref{eq:BEFEs} holds. An explicit example illustrating this will be provided in forthcoming work.}



\section{$(\alpha,\beta)$-Metrics}
\label{sec:ab_metrics}

\subsection{$(\alpha,\beta)$-metrics -- basic definitions}
An important class of Finsler geometries is given by the so-called $(\alpha,\beta)$-metrics. Here 
$\alpha = \sqrt{|a_{\mu\nu}\dot x^\mu\dot x^\nu|}$ and $\beta = b_\mu\dot x^\nu$ are scalar variables defined in terms of a \mbox{(pseudo-)Riemannian} metric $a_{\mu\nu}$ on $M$ and a 1-form $b_\mu$ on $M$, and an $(\alpha,\beta)$-metric is simply a Finsler metric that is constructed only from $\alpha$ and $\beta$, i.e. $F = F(\alpha, \beta)$. Due to homogeneity it follows that any such $F$ can be written in the standard form $F = \alpha\phi(\beta/\alpha)$ for some function $\phi$, at least whenever $\alpha\neq 0$. Well-known examples of $(\alpha,\beta)$-metrics are:
\begin{itemize}
\item Pseudo-Riemannian Finsler metrics $F = \alpha$;
\item Randers metrics $F = \alpha + \beta$;
\item Kropina metrics $F = \frac{\alpha^2}{\beta}$;
\item \note{$m$-Kropina metrics $F = \alpha^{1+m}\beta^{-m}$ with $m$ some real number. Also referred to as generalized Kropina metrics, Bogoslovsky metrics or Bogoslovsky-Kropina metrics.}
\end{itemize}

\note{For each of these types of \ab-metrics certain conditions need to be fulfilled in order to satisfy the definition of a Finsler spacetime \cite{Voicu23_Finsler_ab_spacetime_condition}.}

\subsection{Exact $(\alpha,\beta)$-metric solutions in Finsler gravity}
\label{sec:exact_ab_sols}

From the physical viewpoint, $(\alpha,\beta)$-metrics allow us to deform a GR spacetime $\alpha$ into a Finsler spacetime by the 1-form $\beta$. And it turns out, as we will prove below, that these types of metrics can be used to generalize some of the vacuum solutions to Einstein's field equations to properly Finslerian vacuum solutions in Finsler gravity. This procedure is possible whenever such a solution admits a covariantly constant vector field, or equivalently, 1-form. Namely: if the Lorentzian metric $\alpha$ solves the classical Einstein equations and the 1-form $\beta$ is covariantly constant with respect to $\alpha$ then any $(\alpha,\beta)$-metric constructed from the given $\alpha$ and $\beta$ is a solution to the Finslerian field equations. To see why this is true, we first recall the following well-known result (see e.g. section 6.3.2. in \cite{handbook_Finsler_vol2}): 

\begin{prop}
\label{prop:coinciding_spray}
Let $F$ be an $(\alpha,\beta)$-metric. If $\beta$ is covariantly constant with respect to $\alpha$ then $F$ is of Berwald type and the affine connection of $F$ coincides with the Levi-Civita connection of $\alpha$.
\end{prop}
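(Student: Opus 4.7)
The plan is to show that the spray coefficients of $F$ coincide with those of the underlying pseudo-Riemannian metric $\alpha$. Since the latter, $\bar G^\mu(x,y) = \tfrac12\bar\Gamma^\mu_{\nu\rho}(x)\,y^\nu y^\rho$, are quadratic in $y$, the Cartan nonlinear connection coefficients $N^\mu_\nu = \bar\partial_\nu G^\mu$ of $F$ will then reduce to $\bar\Gamma^\mu_{\nu\rho}(x)\,y^\rho$, which simultaneously establishes that $F$ is of Berwald type and that its associated affine connection is the Levi-Civita connection of $\alpha$.

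To carry this out, recall that the spray of $F$ is $G^\mu = \tfrac14 g^{\mu\sigma}\bigl(y^\nu\partial_\nu\bar\partial_\sigma F^2 - \partial_\sigma F^2\bigr)$. For $F = \alpha\phi(\beta/\alpha)$ this is a classical direct computation: one uses the chain rule to expand $\partial_\sigma F^2$ and $\bar\partial_\sigma F^2$ in terms of derivatives of $\alpha^2$, $\beta$ and $\phi$, and then contracts with the known explicit form of the inverse fundamental tensor $g^{\mu\sigma}$ of an $(\alpha,\beta)$-metric, expressible in terms of $a^{\mu\nu}$, $b^\mu$, $y^\mu$ and certain functions of $s=\beta/\alpha$. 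The outcome is a decomposition of the form
\begin{align*}
G^\mu \;=\; \bar G^\mu \;+\; (\text{terms built linearly from } r_{\mu\nu} \text{ and } s_{\mu\nu})\,,
\end{align*}
where $r_{\mu\nu} = \tfrac12(\nabla_\mu b_\nu + \nabla_\nu b_\mu)$ and $s_{\mu\nu} = \tfrac12(\nabla_\mu b_\nu - \nabla_\nu b_\mu)$ are the symmetric and antisymmetric parts of the Levi-Civita covariant derivative of $\beta$. This is standard material, worked out explicitly in the positive-definite case in Section 6.3.2 of \cite{handbook_Finsler_vol2}; the Lorentzian version requires only cosmetic sign changes.

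The hypothesis that $\beta$ is covariantly constant with respect to $\alpha$ gives $\nabla_\mu b_\nu = 0$, hence $r_{\mu\nu} = s_{\mu\nu} = 0$ identically. Every correction term in the above decomposition then vanishes, so $G^\mu = \bar G^\mu$, from which the claim follows. The main, and essentially the only non-routine, step in this approach is verifying that the decomposition indeed takes the advertised form, i.e.\ that every term not of the shape $\bar G^\mu$ carries at least one factor of $r_{\mu\nu}$ or $s_{\mu\nu}$; in the Lorentzian setting some additional care with signs is needed due to the absolute value in $\alpha = \sqrt{|a_{\mu\nu}y^\mu y^\nu|}$, which is handled by working locally in open regions where the sign of $a_{\mu\nu}y^\mu y^\nu$ is constant.
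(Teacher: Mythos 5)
Your argument is correct and is essentially the same one the paper relies on: the paper does not prove Proposition \ref{prop:coinciding_spray} itself but cites Section 6.3.2 of \cite{handbook_Finsler_vol2}, where precisely this spray-coefficient decomposition (with all correction terms proportional to $r_{\mu\nu}$ and $s_{\mu\nu}$) is carried out. Your additional remark about handling the absolute value in $\alpha$ by working on regions of constant $\sgn(A)$ is a sensible and correct refinement for the Lorentzian setting.
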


If the affine connection of $F$ is the same as the connection of $\alpha$, the associated curvature tensors and (affine) Ricci tensors are also the same. So if $\alpha$ happens be a vacuum solution to Einstein gravity, i.e. its Ricci tensor vanishes, then it follows that the affine Ricci tensor of $F$ vanishes as well, which implies, by eq. \eqref{eq:BEFEs}, that $F$ is a vacuum solution to Pfeifer and Wohlfarth's field equation in Finsler gravity. We may summarize this result in the following theorem.

\begin{theor}\label{theor:(alpha,beta)solutions}
Let $F$ be any $(\alpha,\beta)$-metric such that $\alpha$ solves the classical vacuum Einstein equations and $\beta$ is covariantly constant with respect to $\alpha$. Then $F$ is a vacuum solution to the field equation in Finsler gravity.
\end{theor}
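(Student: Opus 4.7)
The plan is to leverage Proposition \ref{prop:coinciding_spray} to reduce the Finslerian field equation to a condition on the affine connection of $\alpha$, at which point Einstein's vacuum equation directly supplies what we need. First I would invoke Proposition \ref{prop:coinciding_spray} to conclude that $F$ is of Berwald type and that its associated affine connection coincides with the Levi-Civita connection of $\alpha$. This is the structural fact doing the real work: it means the Christoffel symbols $\Gamma^\rho_{\mu\nu}$ entering the Berwald curvature formulas \eqref{eq:symm_ricci} are exactly those of $\alpha$, and hence the affine Riemann tensor $\bar R_l{}^i{}_{jk}$ and affine Ricci tensor $\bar R_{ij}$ appearing there are literally the Riemann and Ricci tensors of the pseudo-Riemannian metric $\alpha$.

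Next I would apply the Berwald expression \eqref{eq:symm_ricci} for the Finsler Ricci tensor, $R_{\mu\nu} = \tfrac{1}{2}(\bar R_{\mu\nu}+\bar R_{\nu\mu})$. Since $\alpha$ satisfies the classical vacuum Einstein equations, $\bar R_{\mu\nu}=0$, and consequently $R_{\mu\nu}=0$ as well (in fact $\bar R_{\mu\nu}$ is already symmetric in this case, so no symmetrization is needed). Substituting into the Berwald form \eqref{eq:BEFEs} of the Pfeifer–Wohlfarth vacuum field equation then yields
\begin{align}
\left(F^2 g^{\mu\nu}-3y^\mu y^\nu\right)R_{\mu\nu}=0
\end{align}
trivially at every $(x,y)\in\mathcal A$, which is the desired conclusion.

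There is no serious obstacle here once Proposition \ref{prop:coinciding_spray} is granted; the argument is essentially bookkeeping that chains together the coincidence of connections with the Berwald curvature identities. The one subtlety worth flagging is that in order for \eqref{eq:BEFEs} to even apply, the pair $(\alpha,\beta)$ must give rise to a well-defined Finsler spacetime in the sense of Section \ref{sec:FinslerSpacetimes}, i.e.\ the fundamental tensor of $F=\alpha\phi(\beta/\alpha)$ must be of Lorentzian signature on some conic subbundle $\mathcal A$; this is a condition on $\phi$ and on the causal relation between $\alpha$ and $\beta$ that we take as an implicit hypothesis, consistent with the remark on \ab-spacetime conditions made just before the theorem.
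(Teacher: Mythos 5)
Your proposal is correct and follows essentially the same route as the paper: Proposition \ref{prop:coinciding_spray} identifies the affine connection of $F$ with the Levi-Civita connection of $\alpha$, so the affine Ricci tensor inherited through \eqref{eq:symm_ricci} vanishes by the vacuum Einstein equations, and \eqref{eq:BEFEs} is satisfied trivially. Your closing remark on the implicit requirement that $(\alpha,\beta,\phi)$ actually define a Finsler spacetime matches the caveat the paper makes just before introducing its solution classes.
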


 In this way $(\alpha,\beta)$-metrics provide a mechanism to \textit{Finslerize} any vacuum solution to Einstein's field equations, as long as the solution admits a covariantly 1-form, or equivalently a covariantly constant vector field. The theorem generalizes some of the results obtained in \cite{Heefer_2021} for Randers metrics and in \cite{Fuster:2015tua,Fuster:2018djw} for $m$-Kropina metrics (i.e. VGR spacetimes) to arbitrary Finsler spacetimes with $(\alpha,\beta)$-metric. In particular, all pp-wave type solutions in Finsler gravity currently known in the literature are of this type.\\

Let's investigate this type of solution in some more detail. It turns out that if a vacuum solution $\alpha$ to Einstein's field equations admits a covariantly constant 1-form $\beta$, then either $\alpha$ is flat, or $\beta$ is necessarily null \cite{EhlersKundt1960} (see also \cite{Hall_2000,Batista_2014}). We remark that this result assumes that the spacetime dimension is $1+3$ and generally is not true in higher dimensions. This leads to two classes of solutions.\\

\noindent\textbf{First class of solutions}\\
The first of these possibilities, where $\alpha$ is flat, leads to a class of solutions that can always be written in suitable coordinates in the following way.\\

\fbox{\begin{minipage}{0.9\textwidth}
\textbf{$\bm{(\alpha,\beta)}$-metric solutions (Class 1).} Let the metric $A$ and 1-form $\beta$ be given by
\begin{align}\label{eq:class1_ab_sols}
A = -(\D x^0)^2 + (\D x^1)^2 + (\D x^2)^2 + (\D x^3)^2 , \qquad \beta = b_\mu \D x^\mu,
\end{align}
where $b_\mu=$ const. Then any \ab-metric constructed from $\alpha=\sqrt{|A|}$ and $\beta$ is a vacuum solution to the field equations in Finsler gravity. The resulting geometry is of Berwald type with all affine connection coefficients vanishing identically in these coordinates.
\end{minipage}}\\\\

Right below Eq. \eqref{eq:class1_ab_sols} we have used the notation $\alpha = \sqrt{|A|} = \sqrt{|a_{ij}\D x^i \D x^j|}$. This should be understood pointwise, i.e. 
\begin{align}
    \alpha = \alpha(y) = \sqrt{|a_{ij}\D x^i \D x^j|}(y) = \sqrt{|a_{ij}\D x^i(y) \D x^j(y)|} = \sqrt{|a_{ij} y^i y^j|}.
\end{align}
In other words, we sometimes write $\alpha$ for the function $\sqrt{|a_{ij}\D x^i \D x^j|}:y\mapsto \sqrt{|a_{ij} y^i y^j|}$, and at other times we write $\alpha$ for its value $\sqrt{|a_{ij} y^i y^j|}$ at $y$. It should always be clear from context what is meant.\\

\noindent\textbf{Second class of solutions}\\
The second possibility, that $\beta$ is null, leads to a class of solutions that seems to be  more interesting. In this case $\alpha$ is CCNV spacetime metric, meaning that it admits a covariantly constant null vector (CCNV), namely in this case $\beta$, or rather its vector equivalent via the isomorphism induced by $\alpha$. CCNV metrics are also known as \textit{pp-waves} (plane-fronted gravitational waves with parallel rays) and have been studied in detail in \cite{EhlersKundt1960,ehlers1962exact} (see section 24.5 in \cite{stephani_kramer_maccallum_hoenselaers_herlt_2003} for a summary). \\

It is an elementary result that by choosing suitable coordinates $(u,v,x^1,x^2)$, such $\alpha$ and $\beta$ can always be expressed in the form

\begin{align}
A &= -2\D u \left(\D v + H(u,x)\, \D u + \,W_a(u,x)\,\D x^a\right) +h_{ab}(u,x) \D x^a \D x^b, \label{eq:original_pp_wave_metric} \\
\beta &= \D u,\label{eq:original_1form}
\end{align}
where $x^a=x^1,x^2$ and $h_{ab}$ is a two-dimensional Riemannian metric. This holds irrespective of whether $\alpha$ is a solution to Einstein's field equations or not. If $\alpha$ is additionally assumed to be a vacuum solution, as in Theorem \ref{theor:(alpha,beta)solutions}, it turns out that the expression \eqref{eq:original_pp_wave_metric} for $A$ can be simplified even more \textit{without changing the form \eqref{eq:original_1form} of $\beta$}. To see this, we first consider only the metric $A$. Since $A$ is a vacuum solution to Einstein's field equations, it follows that the functions $W_a$ can be eliminated and $h_{ab}$ may be chosen as $\delta_{ab}$, by a suitable coordinate transformation (section 24.5 in \cite{stephani_kramer_maccallum_hoenselaers_herlt_2003}). The metric then takes the form
\begin{align}
A = -2\D u \left(\D v + H(u,x)\, \D u\right) +\delta_{ab} \D x^a \D x^b. \label{eq:reduced_pp_wave_metric}
\end{align}
We are, however, not only interested in the transformation behaviour of $A$ alone, but also in that of $\beta$, because an \ab-metric is composed of both. To see why we may assume without loss of generality that the form of $\beta = \D u$ remains invariant we use the fact that any coordinate transformation 
\begin{align}
(u,v,x^1,x^2)\mapsto (\bar u,\bar v,\bar x^1,\bar x^2)
\end{align}
that leaves the generic form of the metric \eqref{eq:original_pp_wave_metric} invariant, but in general changing the expressions for the metric functions $H, W_a, h_{ab}\mapsto \bar H, \bar W_a, \bar h_{ab}$, has the specific property that $u = \phi(\bar u)$ for some function $\phi$ depending on $\bar u$ alone (see section 31.2 in \cite{stephani_kramer_maccallum_hoenselaers_herlt_2003}). This applies in particular to the transformation that relates \eqref{eq:original_pp_wave_metric} and \eqref{eq:reduced_pp_wave_metric}. We can therefore express the 1-form as $\beta = \D  u = \phi'(\bar u)\D \bar u$, or equivalently $\bar b_\mu = \phi'(\bar u)\delta_\mu^u$. However, since $\beta$ is covariantly constant with respect to $A$, we must have $\bar \nabla_\mu\bar b_\nu=0$.  All  Christoffel symbols $\bar \Gamma^u_{\mu\nu}$ of the metric \eqref{eq:reduced_pp_wave_metric} with upper index $u$ vanish identically, however. Hence
\begin{align}
\bar \nabla_{\bar u} \bar b_{\bar u} = \partial_{\bar u} b_{\bar u} - \bar \Gamma^u_{uu}\phi'(\bar u) = \phi''(\bar u)\stackrel{!}{=}0.
\end{align}
It follows that $\phi'(\bar u)= C =$ constant, i.e. $\beta = C \D \bar u$. In this case it is easily seen that scaling $\bar u$ by $C$ and scaling $\bar v$ by $1/C$ leaves the metric \eqref{eq:reduced_pp_wave_metric} invariant and brings the 1-form back into its original form, proving that we may assume without loss of generality that the 1-form remains invariant under the coordinate transformation.\\

Finally, the metric \eqref{eq:reduced_pp_wave_metric} is a vacuum solution to Einstein's field equations if and only if  $(\partial_{x^1}^2+\partial_{x^2}^2)H = 0$. We may therefore characterize the second class of solutions in the following way.\\

\fbox{\begin{minipage}{0.9\textwidth}
\textbf{$\bm{(\alpha,\beta)}$-metric solutions (Class 2).} Let $\alpha = \sqrt{|A|}$ and $\beta$ be given by 
\begin{align}
A &= -2\D u \left(\D v + H(u,x)\, \D u\right) +\delta_{ab} \D x^a \D x^b, \label{eq:reduced_pp_wave_metric2} \\
\beta &= \D u \label{eq:reduced_pp_wave_1_form2},
\end{align}
such that $\delta^{ab}\partial_a\partial_b H = 0$. Then any \ab-metric constructed from the pair ($\alpha,\beta$) is a vacuum solution to the field equations in Finsler gravity. The resulting geometry is of Berwald type with affine connection identical to the Levi-Civita connection of $\alpha$.
\end{minipage}}\\\\

Note that when $H=0$ the geometries in \textit{Class 2} are also contained in \textit{Class 1}. It is not the case, however, that \textit{Class 1} is a subset of \textit{Class 2} because in \textit{Class 1} the 1-form $\beta$ need not be null,  necessarily. The preceding line of argument shows that these two classes of solutions in fact exhaust all possibilities, which we encapsulate in the following theorem.

\begin{theor}\label{theor:(alpha,beta)solutions2}
\textit{Any} vacuum solution of the type of Theorem \ref{theor:(alpha,beta)solutions} must belong to one of the two classes introduced above. 
\end{theor}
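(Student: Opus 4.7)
The plan is to combine the Ehlers--Kundt dichotomy with a careful choice of coordinates in each case; in fact most of the argument has already been carried out in the discussion preceding the statement, so the proof is essentially a matter of organizing it.

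First I would invoke the classical result of Ehlers and Kundt cited above: in spacetime dimension $1+3$, if $\alpha$ is a vacuum solution of Einstein's equations admitting a non-trivial covariantly constant $1$-form $\beta$, then either $\alpha$ is flat or $\beta$ is null. (The trivial sub-case $\beta=0$ is subsumed in Class 1 with $b_\mu=0$.) This reduces the theorem to these two cases.

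\textbf{Flat case.} I would choose global Minkowski coordinates so that $A$ takes the form \eqref{eq:class1_ab_sols}. In these coordinates all Christoffel symbols of $\alpha$ vanish identically, so covariant constancy of $\beta = b_\mu\D x^\mu$ reduces to $\partial_\nu b_\mu = 0$, i.e.\ $b_\mu$ is constant. This is precisely Class 1.

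\textbf{Null case.} Here $\beta^\sharp$ is a covariantly constant null vector field, so by definition $(M,\alpha)$ is a CCNV (pp-wave) spacetime. Standard pp-wave normal forms bring $A$ and $\beta$ simultaneously into the form \eqref{eq:original_pp_wave_metric}--\eqref{eq:original_1form}. The vacuum condition on $\alpha$ then allows one to eliminate $W_a$ and set $h_{ab}=\delta_{ab}$ by a further coordinate transformation, bringing $A$ into \eqref{eq:reduced_pp_wave_metric2}. The one genuinely nontrivial point, which I would single out as the main obstacle, is to verify that this last simplification can be arranged to preserve the form \eqref{eq:reduced_pp_wave_1_form2} of $\beta$. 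This is exactly the argument already spelled out in the paragraph above the statement: any coordinate change preserving the pp-wave form of $A$ must satisfy $u = \phi(\bar u)$, and covariant constancy of $\beta$ combined with the vanishing of $\bar\Gamma^{u}_{\mu\nu}$ in the reduced metric forces $\phi''=0$; a subsequent rescaling of $\bar u$ and $\bar v$ then absorbs the constant $\phi'$ and restores $\beta = \D\bar u$. Finally, in the reduced form the vacuum equation collapses to $\delta^{ab}\partial_a\partial_b H = 0$, yielding Class 2.

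In summary, the only nontrivial input beyond citing Ehlers--Kundt and the standard pp-wave coordinate normal forms is the compatibility of the coordinate reduction of $A$ with preservation of $\beta$; this has already been established, so the classification into Class 1 and Class 2 is exhaustive.
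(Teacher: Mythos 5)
Your argument is essentially identical to the paper's own proof, which consists precisely of the discussion preceding the theorem: the Ehlers--Kundt dichotomy, Minkowski coordinates in the flat case, and the pp-wave normal form together with the $u=\phi(\bar u)$, $\phi''=0$ argument showing that $\beta=\D u$ can be preserved under the reduction of $A$. (Your parenthetical that the case $\beta=0$ is subsumed in Class 1 is not quite right --- Class 1 requires $\alpha$ flat, whereas a vanishing $\beta$ places no constraint on $\alpha$ --- but this degenerate case is implicitly excluded by the paper as well, since there the $(\alpha,\beta)$-metric is merely a constant rescaling of $\alpha$.)
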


Before we move on to \ab-type solutions of plane-wave type, we end this section by noting that for specific types of $(\alpha,\beta)$-metrics, stronger results have been obtained than the ones derived above:
\begin{itemize}
\item For Randers metrics of Berwald type \textit{any} vacuum solution to \eqref{eq:BEFEs} must be of the type described in theorem \ref{theor:(alpha,beta)solutions}, that is, $\alpha$ is necessarily a vacuum solution in Einstein gravity and $\beta$ is necessarily covariantly constant \cite{Heefer_2021}. Any such solution is therefore either of \textit{Class 1} or \textit{Class 2} in the terminology introduced above.
\item For $m$-Kropina metrics some vacuum solutions of a more general type than the one in theorem \ref{theor:(alpha,beta)solutions} have been obtained in the context of \textit{Very General Relativity (VGR)} \cite{Fuster:2018djw}. 
\item Any pseudo-Riemannian Finsler metric $F = \alpha$ is trivially a vacuum solution in Finsler gravity if and only if it is a vacuum solution in Einstein gravity.
\end{itemize}

To the best of our knowledge this list comprises all exact solutions in Finsler gravity currently known in the literature.

\subsection{Plane wave solutions in Brinkman and Rosen coordinates}

Eq. \eqref{eq:reduced_pp_wave_metric2} expresses the pp-wave metric in Brinkmann form \cite{Brinkmann:1925fr}. For the description of the physical effects of (plane) gravitational waves in general relativity, it is sometimes more convenient to use a different coordinate system, known as Rosen coordinates \cite{rosen1937plane}. This remains true in the Finsler case. When we compute the effect on the radar distance of a passing Randers gravitational wave in section \ref{sec:radar_distance}, our starting point will be the expression for the gravitational wave in Rosen coordinates. Therefore we briefly review the relation between the two coordinate systems here.\\

Rosen coordinates can be introduced for the subclass of pp-waves known as \textit{plane waves}. These can be characterized by the property that the curvature tensor does not change (i.e. is covariantly constant) along the Euclidean `wave surfaces' given in Brinkmann coordinates by $\D u = \D v = 0$, i.e.
\begin{align}\label{eq:invariance_of_Riemann_curvature}
\nabla_{\partial_{x^1}} R^\rho{}_{\sigma\mu\nu} = \nabla_{\partial_{x^2}} R^\rho{}_{\sigma\mu\nu} = 0.
\end{align}
We note that $\nabla_{\partial_v} R^\rho{}_{\sigma\mu\nu} = 0$ always holds, identically, so \note{invariance along $\D u = \D v = 0$ is actually equivalent to invariance along the hypersurfaces $\D u = 0$.} The conditions \eqref{eq:invariance_of_Riemann_curvature} are equivalent to the statement that $\partial_a\partial_b\partial_c H = 0$ in Brinkmann coordinates \eqref{eq:reduced_pp_wave_metric2}, i.e. that $H(u,x)$ is a second order polynomial in $x^a$. In that case there always exists a coordinate transformation that removes the linear and constant terms (section 24.5 in 
\cite{stephani_kramer_maccallum_hoenselaers_herlt_2003}) so that the metric can be written as
\begin{align}\label{eq:plane_wave_brinkmann}
A = -2\D u \D v + A_{ab}(u)x^ax^b\, \D u^2+\delta_{ab}\, \D x^a \D x^b
\end{align}
This is the standard expression for a plane-wave metric in Brinkmann form. Moreover, an argument very similar to the one given in the previous subsection, shows that we may assume without loss of generality that the 1-form $\beta = \D u$ remains unchanged under this transformation. \\

Any such plane wave metric can also be written in Rosen form
\begin{align}\label{eq:Rosen_form}
\D s^2 = -2\D U\D V + h_{ij}(U)\D y^i \D y^j,
\end{align}
where $h_{ij}$ is a two-dimensional Riemannian metric. And conversely, any metric of Rosen form \eqref{eq:Rosen_form} can be cast in the form \eqref{eq:plane_wave_brinkmann}. The two coordinate systems are related via
\begin{align}
U = u,\quad V = v - \dfrac{1}{2}\dot E_{ai} E^i{}_b x^a x^b, \quad x^a = E^a{}_iy^i,
\end{align}
where $A_{ab} = \ddot E_{ai}E^i{}_b$ and $E^a{}_i$ is a vielbein for $h_{ij}$ in the sense that $h_{ij} = E^a{}_i E^b{}_j \delta_{ab}$, satisfying the additional symmetry condition $\dot E_{ai} E^i{}_b = \dot E_{bi} E^i{}_a$. Such a vielbein can always be chosen. For details we recommend the lecture notes \cite{Blau2011} by Matthias Blau and references therein (see also the Appendix of \cite{Blau_2003}). Note that we have momentarily labelled the $y$-coordinates by indices $i,j,k,\dots$ so as to distinguish them from indices $a,b,c,\dots$ in order that we may apply the usual notation with regards to the vielbein indices: $E^i{}_a$ represents the (matrix) inverse of $E^a{}_i$ and indices $a,b,c\dots$ are raised and lowered with $\delta_{ab}$, whereas indices $i,j,k,\dots$ are raised and lowered with $h_{ij}$. The dot that appears sometimes above the vielbein represents a $U$-derivative. Since the vielbein depends only on $U$, this derivative is equivalent to a $u$-derivative, and moreover the raising and lowering of the $a,b,c,\dots$ indices commutes with taking such a derivative of the vielbein.\\

It is again the case that, after relabeling $U,V\mapsto u,v$, the 1-form $\beta = \D u = \D U$ remains unchanged under this transformation, which in this case is easy to see. After also relabelling $y\mapsto x$, we conclude that we can express any \textit{Class 2} solution of plane-wave type in Rosen coordinates as follows,

\begin{align}
F = \alpha\, \phi(\beta/\alpha), \qquad A = -2\D u\D v + h_{ij}(u)\D x^i \D x^j, \qquad \beta = \D u \label{eq:ab_plane_wave},
\end{align}
where $\alpha = \sqrt{|A|}$. And conversely, for any choice of $\phi,h_{ij}(u)$, this is a vacuum solution to the field equations in Finsler gravity if $A$ is a vacuum solution to Einstein's field equation. The resulting geometry is of Berwald type with affine connection identical to the Levi-Civita connection of $\alpha$. 

\subsection{Linearized gravitational wave solutions}
\label{sec:linearized_Randers_sols}



The exact vacuum field equation for plane-wave metrics does not have a particularly nice expression in Rosen coordinates \eqref{eq:ab_plane_wave}. The \note{linearized} field equation, however, turns out to be very simple. So let's consider the scenario that the pseudo-Riemannian metric $\alpha$ is very close to the Minkowski metric. In this case we may write $h_{ij}(u) = \delta_{ij}+\varepsilon f_{ij}(u)$ with $\varepsilon \ll  1$. The \note{linearized} field equations (i.e. to first order in $\varepsilon$) for $\alpha$ then simply read\footnote{The full \note{linearized} vacuum field equation \eqref{eq:BEFEs} for $F$ is more complicated in general, but as discussed extensively above, if the vacuum field equation for $\alpha$ is satisfied then so is the field equation for $F$. In the case of Randers metrics, to which we will turn momentarily, the field equation for $F$ is even equivalent to the field equation for $\alpha$. Hence for our present purposes the field equations for $\alpha$ suffice.}
\begin{align}
f_{11}''(u) + f_{22}''(u)=0.
\end{align}
Hence $f_{11}$ and $-f_{22}$ must be equal up to an affine function of $u$. Here we will focus on the case where $f_{11}=-f_{22}$, which can always be achieved by means of the transverse traceless gauge\footnote{We leave open the question whether the form of the 1-form $\beta=\D  u$ always remains invariant under such a transformation to the transverse traceless gauge.}. Conventionally one writes the subscripts as $f_{11} = -f_{22} \eqqcolon f_+$ and $f_{12} \eqqcolon f_\times$, denoting the plus and cross polarization of the gravitational wave, so we will stick to that notation from here onwards. 
That brings us to the following expression that describes Finslerian gravitational waves of \ab-type:

\begin{align}\label{eq:ab_grav_waves}
F = \alpha\, \phi(\beta/\alpha), \qquad\left\{\begin{array}{ll}
 A = -2\D u \D v + (1+\varepsilon f_+(u)) \D x^2 + (1-\varepsilon f_+(u))\D y^2 + 2\varepsilon f_\times (u) \D x\,\D y \\ 
\beta = \D u
\end{array}\right.
\end{align} 

Note that if we substitute $u= (t-z)/\sqrt{2}$ and $v = (t+z)/\sqrt{2}$, then $A$ reduces to the standard expression for a gravitational wave metric in GR, i.e. 
\begin{align}
F = \alpha\, \phi(\beta/\alpha), \qquad\left\{\begin{array}{ll}
 A = -\D t^2 + (1+\varepsilon f_+(t-z)) \D x^2 + (1-\varepsilon f_+(t-z)\D y^2 + 2\varepsilon f_\times (t-z) \D x\,\D y+ \D z^2 \\ 
\beta = \frac{1}{\sqrt{2}}\left(\D t - \D z\right)
\end{array}\right. ,
\end{align}
for any choice of the function $\phi$.

%

\subsection{Linearized $(\alpha,\beta)$-metrics are Randers metrics}
\label{sec:lin_ab_metric_is_randers}

It is natural to linearlize not only in $\varepsilon$, characterizing the departure from flatness, but to also use a perturbative expansion in the `size' of the 1-form, characterizing the departure from GR and pseudo-Riemannian geometry. The physical intuition here is that, seeing how well GR works in most regimes, the most interesting class of Finsler spacetimes constists of those ones that are very close to GR spacetimes. The purpose of this section is to highlight that any \ab-metric is perturbatively equivalent to a Randers metric, to first order, so that from the physics point of view, Randers metrics are actually quite a bit more general than they might seem at first glance. After pointing this out we will turn our focus exclusively to Randers metrics for the remainder of the article.\\

So consider an \ab-metric constructed form a pseudo-Riemannian metric $\alpha$ and a 1-form $\beta$ such that $\beta\ll 1$. To see what happens in such a scenario, we replace $\beta$ with $\lambda\beta$ and expand to first order in $\lambda$. Then we obtain
\begin{align}
F = \alpha \phi\left(\frac{\lambda\beta}{\alpha}\right) \approx \alpha \left(\phi(0)+ \lambda \phi'(0)\frac{\beta}{\alpha}\right) = \alpha \phi(0) + \lambda\phi'(0)\beta = \tilde\alpha + \tilde\beta.
\end{align}
Hence to first order in $\lambda$, any $(\alpha,\beta)$-metric is indeed equivalent to a Randers metric\footnote{Actually this is not true for \textit{all} $(\alpha,\beta)$-metrics but only those which allow an expansion around $s = \beta/\alpha = 0$. This excludes Kropina metrics, for instance, because they are not well-behaved in the limit $\beta\to 0$.}. Consequently, by replacing $\D u$ by $\lambda\,\D u$ in \eqref{eq:ab_grav_waves}, which technically can be achieved by a coordinate transformation that scales $u$ by $\lambda$ and  $v$ by $1/\lambda$, it follows that to first order in $\lambda$ the Finsler metric of the \ab-type gravitational waves takes the form,

\begin{align}\label{eq:metric_naive_Randers_grav_wave0}
F = \alpha + \beta, \qquad\left\{\begin{array}{ll}
 A = -2\D u \D v + (1+\varepsilon f_+(t-z)) \D x^2 + (1-\varepsilon f_+(t-z)\D y^2 + 2\varepsilon f_\times (t-z) \D x\,\D y  \\ 
\beta =  \lambda\,\D u
\end{array}\right. .
\end{align}

The parameter $\lambda$ then characterizes the departure from GR and pseudo-Riemannian geometry. We will assume without loss of generality that $\lambda> 0$. Finally, replacing also $u$ and $v$ by $t$ and $z$, according to $u= (t-z)/\sqrt(2)$ and $v = (t+z)/\sqrt{2}$, we can write the metric in the following way, which we will take as the starting point for the calculation of the radar distance in Section \ref{sec:radar_distance}.

\begin{align}\label{eq:metric_naive_Randers_grav_wave}
F = \alpha + \beta, \qquad\left\{\begin{array}{ll}
 A = -\D t^2 + (1+\varepsilon f_+(t-z)) \D x^2 + (1-\varepsilon f_+(t-z)\D y^2 + 2\varepsilon f_\times (t-z) \D x\,\D y+ \D z^2 \\ 
\beta = \frac{\lambda}{\sqrt{2}}\left(\D t - \D z\right)
\end{array}\right. 
\end{align}


\section{Modified Randers Metrics}
\label{sec:Randers}

Motivated by the argument above we will now turn our focus to the simplest properly Finslerian $(\alpha,\beta)$-metric, the Randers metric, conventionally defined as $F = \alpha+\beta$. We will argue that in order to have a physically \note{natural} causal structure \note{(including, for instance, both a forward and a backward light cone, details follow below)}, the conventional definition must be modified slightly. It might seem to the reader that modifying the Randers metric would be in conflict with the spirit of the previous section, since to first order any \ab-metric should reduce to a Randers metric. It is important to note, however, that there is in principle the possibility that to different regions of the tangent bundle could correspond different Randers metrics. More precisely, we could define one \ab-metric $F_1$ on a conic subbundle $\mathcal A_1\subset TM\setminus 0$ and another \ab-metric, $F_2$, on a different conic subbundle $\mathcal A_2\subset TM\setminus 0$. If the two subbundles do not overlap then this defines a perfectly valid \ab-type Finsler spacetime on the union $\mathcal A = \mathcal A_1\cup\mathcal A_2$. To first order in the deviation from \mbox{(pseudo-)Riemannian} geometry this Finsler metric would reduce to a certain Randers metric on $\mathcal A_1$ and to a different Randers metric on $\mathcal A_2$. \note{On $\mathcal A$ as a whole, however, the resulting linearized metric might not be expressible as a single standard Randers metric. This is what we have in mind, and }our modification of the Randers metric, introduced below, is therefore completely consistent with the previous results. \note{While this paper was in review, a similar procedure was employed in \cite{heefer2023cosmological} to improve the causal structure of cosmological unicorn (i.e. non-Berwaldian Landsberg) solutions, based on the ideas proposed in this section.} \\

After a heuristic argument that motivates the desired modification, we show that our proposed version of the modified Randers metric has a very satisfactory causal structure. As a result of this a clear (future and past) timelike cone can be identified and within these timelike cones the signature of the Fundamental tensor is Lorentzian everywhere. The only constraint is that $b^2\equiv a^{\mu\nu}b_\mu b_\nu>-1$, which, interestingly, is in some sense the opposite of the condition $b^2<1$ that appears in the well-known positive definite case, see e.g. \cite{ChernShen_RiemannFinsler}. In one were to adopt the opposite signature convention to ours, however, the constraint in the Lorentzian case would also turn out to be $b^2<1$, matching the positive definite case.



\subsection{Motivation and definition}
\label{sec:Randers_modified1}


First of all, let us review why the definition of a Randers metric is not as clear in Lorentzian signature as it is in Euclidean signature. The original definition of a Randers metric, in positive definite Finsler geometry, is just $F = \alpha + \beta$, with $\alpha = \sqrt{a_{ij}y^i y^j}$ a Riemannian metric and $\beta = b_i y^i$ any 1-form\footnote{In order to satisfy all the axioms of a Finsler space, the 1-form must satisfy $|b|^2<1$, see e.g. \cite{ChernShen_RiemannFinsler}.}. This is well-defined as long as $\alpha$ is positive-definite, because in that case $A \equiv a_{ij}y^iy^j$ is always positive. If we allow $a_{ij}$ to be a Lorentzian metric, however, the quantity $A$ can become negative, in which case $\sqrt{A}$ is ill-defined, as we want $F$ to be a real function. One way to remedy this, at least at a technical level, is to restrict the conic subbundle $\mathcal A\subset TM\setminus 0$ to those vectors for which $a_{ij}y^iy^j>0$. 
This was the approach in e.g. \cite{Heefer_2021}, where it was shown that if $\mathcal A$ is defined as the forward timecone\footnote{We note that the signature convention in \cite{Heefer_2021} is the opposite as the one employed here, so in that case the condition $a_{ij}y^iy^j>0$ precisely select the timelike, not spacelike, vectors.} corresponding to $\alpha$, then under certain conditions on the 1-form $\beta$, such a Randers metric satisfies all axioms of a Finsler spacetime. The fact that $\mathcal A$ is restricted in this way, however, leads to issues when it comes to the physical interpretation. Here we take a different approach.\\ 

The obvious first alternative to restricting $\mathcal A$ to vectors with positive norm is to simply replace $A$ by $|A|$ and define $\alpha = \sqrt{|A|}$, as we have done throughout this article. In that case there's no need to restrict $\mathcal A$ to the timecone anymore. This leads to a Randers metric of the form $F = \sqrt{|A|} +\beta$. An undesirable consequence of this definition, however, is that light rays can only propagate into one half of the tangent space, namely the half given by $\beta<0$, which follows immediately from the null condition $F=0$ (see also  \cite{Voicu23_Finsler_ab_spacetime_condition}). In fact, the light cone separates the tangent space into only two connected components\footnote{This can be checked easily in suitable coordinates adapted to $\beta$.} and there is consequently not a straightforward interpretation in terms of timelike, spacelike and lightlike directions, at least not in the conventional way\footnote{We note that in the approach by Javaloyes and S\'anchez  \cite{Javaloyes2014-1, Javaloyes2014-2} a single, future pointing (by definition) cone is sufficient, though.}. 
We therefore take the viewpoint that outside of the half plane $\beta<0$ in each tangent space, this version of the Randers metric cannot be valid, and we need to modify it in that region. 
It is possible to remove the condition $\beta\leq 0$, extending the lightcone to the other half-plane $\beta>0$, by changing $F$ to $F=\sgn(A)\sqrt{|A|}+\sgn(\beta)\beta = \sgn(A)\sqrt{|A|}+|\beta|$. The result of this is that, under some mild assumptions (details will follow below) the single lightcone (from the $\beta<0$ half space) is mirrored to the complementary ($\beta\geq 0$) half space, whereas in the original half space intersected with the original cone of definition consisting of $\alpha$-timelike vectors, $F$ reduces to the standard Randers metric with an overall minus sign, $F = -(\alpha+\beta)$. This minus sign is not of any relevence, though, as the geometry is essentially determined by $F^2$. In particular, $F$ is now reversible, i.e. invariant under $y\to-y$. Notice also that we could have chosen a minus sign instead of a plus sign in the modified definition of $F$, but it turns out that in that case the resulting Finsler metric would not be guaranteed to have Lorentzian signature everywhere inside of the timelike cones\footnote{In case one employs the opposite signature convention $(+,-,-,-)$ the converse would be true. In that case the preferable choice would be $F=\sgn(A)\alpha-|\beta|$ rather than $F=\sgn(A)\alpha+|\beta|$.}. The present metric \textit{does} have this property as long as $b^2>-1$, and we discuss this in detail below.\\

\begin{defi}
Motivated by the preceding heuristic argument we define the \textit{modified Randers metric} as follows,
\begin{align}
F = \sgn(A)\alpha + |\beta|,\label{eq:modified_randers_metric}
\end{align}
where we recall for completeness that $\alpha = \sqrt{|A|}$,  $A = a_{ij}y^i y^j$ $\beta = b_i y^i$. 
\end{defi}
Both $\alpha$ and $A$ will sometimes be referred to as the (pseudo-)Riemannian metric, by a slight abuse of language, but it should always be clear from context what is meant.

\subsection{Causal structure}
\label{sec:Randers_causality}
Next we will show that the modified Randers metric \eqref{eq:modified_randers_metric} indeed has very nice properties. By definition, the light cone is given by 
\begin{align}
F = 0 \qquad \Leftrightarrow \qquad  \sgn(A)\leq 0 \,\&\, |A| = |\beta|^2 \qquad \Leftrightarrow \qquad  A =  -\beta^2.
\end{align}
It therefore follows that
\begin{align}
F=0 \qquad \Leftrightarrow\qquad (a_{\mu\nu}+ b_\mu b_\nu)\D x^\mu \D x^\nu = 0,
\end{align}
meaning that the light cone of $F$ is just the light cone of the auxilliary Lorentzian metric $\tilde a_{\mu\nu}(x) = a_{\mu\nu}+b_\mu b_\nu$. Indeed, the matrix determinant lemma guarantees that as long as $b^2 = a_{\mu\nu} b^\mu b^\nu>-1$ the metric $a_{\mu\nu}+ b_\mu b_\nu$ has Lorentzian signature, provided that $a_{\mu\nu}$ has Lorentzian signature. (For a proof see appendix \ref{sec:proof_of_signature}.)  This shows that as long as $b^2>-1$ the light cone separates the tangent space at each point into three connected components, which we can naturally interpret in the usual manner as the forward time cone, backward timecone, and the remainder consisting of spacelike vectors. Coincidentally we note that
\begin{align}
F<0 \qquad \Leftrightarrow\qquad (a_{\mu\nu}+ b_\mu b_\nu) y^\mu y^\nu < 0,
\end{align}
and hence it also follows that
\begin{align}
F>0 \qquad \Leftrightarrow\qquad (a_{\mu\nu}+ b_\mu b_\nu)y^\mu y^\nu > 0.
\end{align}
This leads to the additional convenience that $F$-timelike vectors are precisely given by $F<0$, and $F$-spacelike vectors by $F>0$, in addition to the null vectors being given, by definition, by $F=0$. We summarize these results in the following proposition.\\


\begin{prop}
    As long as $b^2>-1$, the causal structure of the modified Randers metric $F = \sgn(A)\alpha + |\beta|$ is identical to the causal structure of the \textit{Lorentzian} metric $a_{\mu\nu}+ b_\mu b_\nu$, with null vectors given by $F=0$, timelike vectors given by $F<0$, and spacelike vectors by $F>0$.
\end{prop}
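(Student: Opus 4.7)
The plan is to reduce the proposition to three pointwise sign equivalences, all on the same tangent space: $F=0 \Leftrightarrow \tilde a(y,y)=0$, $F<0 \Leftrightarrow \tilde a(y,y)<0$, and $F>0 \Leftrightarrow \tilde a(y,y)>0$, where $\tilde a_{\mu\nu} = a_{\mu\nu}+b_\mu b_\nu$. Once these are established, the remainder of the claim follows from the signature result cited in the appendix: under $b^2>-1$ the matrix determinant lemma guarantees that $\tilde a$ is Lorentzian, so its null cone partitions each tangent space into exactly three connected components (the forward and backward timecones, and the complement of the closed causal cone), and the equivalences above identify those components with the loci $\{F<0\}$ and $\{F>0\}$.

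First I would establish the null equivalence. Since $|\beta|\geq 0$, the condition $F=0$ reads $\sgn(A)\alpha = -|\beta|\leq 0$, which forces $A\leq 0$ and $\alpha=|\beta|$. Squaring then gives $|A|=\beta^2$, i.e. $-A=\beta^2$, which is exactly $\tilde a(y,y)=0$. The converse is immediate from the same algebra.

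Next I would handle the strict inequalities by case analysis on the sign of $A$. If $A\geq 0$ then $F=\alpha+|\beta|\geq 0$, so $F<0$ forces $A<0$, and then $F=-\alpha+|\beta|<0 \Leftrightarrow \alpha>|\beta| \Leftrightarrow |A|>\beta^2 \Leftrightarrow -A-\beta^2>0 \Leftrightarrow \tilde a(y,y)<0$. For $F>0$ both signs of $A$ can occur: if $A>0$ then $F=\alpha+|\beta|>0$ and $\tilde a(y,y)=A+\beta^2>0$ automatically; if $A\leq 0$ then $F=-\alpha+|\beta|>0 \Leftrightarrow |\beta|>\alpha \Leftrightarrow \beta^2>|A|=-A \Leftrightarrow \tilde a(y,y)>0$. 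These three equivalences exhaust all of $\mathcal A$, so the sign of $F$ is everywhere controlled by the sign of $\tilde a(y,y)$.

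I do not expect any real obstacle here, as the whole argument is an unwinding of the definition of $F$ and an elementary case split. The only points requiring some care are the degenerate cases $\alpha=0$ or $\beta=0$, which must be checked to be consistent with the equivalences (they all reduce to trivial sub-cases), and the invocation of the appendix result establishing that $\tilde a$ has Lorentzian signature precisely when $b^2>-1$, which is the single place where the hypothesis on $\beta$ enters.
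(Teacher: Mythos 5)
Your proof is correct and follows essentially the same route as the paper: establish the pointwise sign equivalences $F=0 \Leftrightarrow A=-\beta^2$, $F<0 \Leftrightarrow A+\beta^2<0$, $F>0 \Leftrightarrow A+\beta^2>0$, and then invoke the appendix result that $a_{\mu\nu}+b_\mu b_\nu$ is Lorentzian when $b^2>-1$. The only difference is that you spell out the case analysis on $\sgn(A)$ for the strict inequalities, which the paper leaves implicit.
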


As a result of these nice features of the causal structure of the modified Randers metric, it is possible to define time orientations in the usual manner, by means of a nowhere vanishing timelike vector field $T$. Such $T$ selects one of the two timelike cones as the `forward' one, namely the one that contains $T$. Then another timelike vector $y$ is future oriented (i.e. lies in the same forward cone as $T$) if and only if $(a_{\mu\nu}+ b_\mu b_\nu) T^\mu y^\nu  <0$. \note{We note that similar characterizations of time orientations in Finsler spacetimes (although not in terms of an auxilliary pseudo-Riemannian metric, like $a_{\mu\nu}+ b_\mu b_\nu$) have been discussed in \cite{Javaloyes2019}.} \\

In the special case that $\beta$ is covariantly constant with respect to $\alpha$ we have even more satisfactory results. In that case not only the causal structure but also the affine structure of $F$ can be understood in terms of $a_{\mu\nu}+b_\mu b_\nu$.

\begin{prop}
If $\beta$ is covariantly constant with respect to $\alpha$ and satisfies $b^2>-1$ then the causal structure and the affine structure of the modified Randers metric $F = \sgn(A)\alpha + |\beta|$ are identical to those of the Lorentzian metric $\tilde a_{\mu\nu} = a_{\mu\nu} + b_\mu b_\nu$. In other words, the timelike, spacelike and null geodesics of $F$ coincide with the timelike, spacelike and null geodesics of $\tilde a_{\mu\nu}$.
\end{prop}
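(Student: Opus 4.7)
The plan has two essentially independent steps: (i) show that the geodesic sprays of $F$ and of $\tilde a_{\mu\nu} = a_{\mu\nu}+b_\mu b_\nu$ coincide on $M$, and (ii) invoke the preceding proposition to conclude that the causal classification of those geodesics also coincides.

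For step (i), my first move is to reduce the modified Randers metric $F=\sgn(A)\alpha+|\beta|$ to a standard $(\alpha,\beta)$-metric on each open subset where $\sgn(A)$ and $\sgn(\beta)$ are both constant. On such a subset $F$ equals $\pm\alpha\pm\beta$, which up to an overall sign is an ordinary Randers metric. Since geodesic sprays and the Berwald affine connection depend only on $F^{2}$, the overall sign is irrelevant. Proposition~\ref{prop:coinciding_spray} then applies: because $\beta$ is covariantly constant with respect to $\alpha$, $F$ is Berwald and its affine connection equals the Levi-Civita connection $\nabla^{\alpha}$ of $\alpha$. Because this statement is local and the open sets where $\sgn(A)\sgn(\beta)\neq 0$ cover a dense subset of $TM\setminus 0$, the Berwald connection of $F$ globally coincides with $\nabla^{\alpha}$, and in particular the autoparallels of $F$ are precisely the Levi-Civita geodesics of $a_{\mu\nu}$.

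Next I would show that $\nabla^{\alpha}$ is in fact also the Levi-Civita connection of $\tilde a_{\mu\nu}=a_{\mu\nu}+b_\mu b_\nu$. Torsion-freeness is immediate since $\nabla^{\alpha}$ is Levi-Civita. For metric compatibility, compute
\begin{equation}
\nabla^{\alpha}_{\rho}\tilde a_{\mu\nu}
= \nabla^{\alpha}_{\rho}a_{\mu\nu}+ (\nabla^{\alpha}_{\rho}b_\mu)\,b_\nu + b_\mu(\nabla^{\alpha}_{\rho}b_\nu) = 0,
\end{equation}
where the first term vanishes by metric compatibility of $\nabla^{\alpha}$ with $a_{\mu\nu}$ and the remaining terms vanish by the hypothesis $\nabla^{\alpha}\beta=0$. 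Since $b^{2}>-1$ ensures (by the signature result cited just before the previous proposition, proved in the appendix) that $\tilde a_{\mu\nu}$ is a genuine Lorentzian metric, uniqueness of the Levi-Civita connection forces $\nabla^{\alpha}=\nabla^{\tilde a}$. Consequently, the geodesic equations of $F$ and of $\tilde a_{\mu\nu}$ are literally identical as ODEs on $M$, so they have the same unparametrized solution curves.

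For step (ii), the causal character of a tangent vector along such a geodesic is invariant under affine reparametrization, and the preceding proposition established that a vector $y$ is timelike, spacelike, or null for $F$ if and only if it is, respectively, timelike, spacelike, or null for $\tilde a_{\mu\nu}$. Since the two connections agree, a vector's causal type is preserved along geodesics of $\nabla^{\alpha}=\nabla^{\tilde a}$ (as follows from $\nabla^{\tilde a}\tilde a=0$), so the $F$-timelike, $F$-spacelike and $F$-null geodesics coincide with the $\tilde a$-timelike, $\tilde a$-spacelike and $\tilde a$-null geodesics, respectively. The main subtlety I anticipate is the smoothness issue introduced by the sign functions in the definition of $F$: one must check that the reduction to a local standard Randers metric is legitimate on the relevant open sets and that the Berwald connection extends smoothly across the loci where $\sgn(A)$ or $\sgn(\beta)$ changes. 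This is handled by observing that the Christoffel symbols of $\nabla^{\alpha}$ are smooth on all of $M$ and, by the local argument, coincide with the Berwald connection coefficients of $F$ on a dense open set of $TM\setminus 0$, hence everywhere by continuity.
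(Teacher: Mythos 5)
Your proof is correct, and its overall skeleton matches the paper's: establish that the affine structure of $F$ equals that of $a_{\mu\nu}$ via Proposition~\ref{prop:coinciding_spray}, establish that the affine structure of $\tilde a_{\mu\nu}$ also equals that of $a_{\mu\nu}$, and import the causal statement from the preceding proposition. The genuine difference lies in how you prove the middle equality. The paper derives (in Appendix~\ref{sec:proof_of_signature}) an explicit formula expressing $\widetilde\Gamma^\rho_{\mu\nu}$ in terms of $\Gamma^\rho_{\mu\nu}$ and covariant derivatives of $b_\mu$, and then observes that all correction terms vanish when $\nabla b=0$. You instead verify directly that $\nabla^{\alpha}$ is torsion-free and satisfies $\nabla^{\alpha}\tilde a=0$ under the hypothesis $\nabla^{\alpha}b=0$, and invoke uniqueness of the Levi-Civita connection of the nondegenerate metric $\tilde a_{\mu\nu}$ (nondegeneracy being exactly where $b^2>-1$ enters). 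Your route is shorter and more conceptual for this proposition, though it only covers the covariantly constant case, whereas the paper's explicit formula quantifies the discrepancy between the two connections for arbitrary $\beta$ and is reused as a standalone appendix result. Your extra care about the sign functions $\sgn(A)$ and $\sgn(\beta)$ --- reducing $F$ to $\pm\alpha\pm\beta$ on the open dense set where both signs are locally constant and extending by continuity --- is a legitimate point that the paper handles only implicitly (by the remarks at the start of Section~\ref{sec:Randers} on patching Randers metrics over conic subbundles), so making it explicit is a small improvement rather than a deviation.
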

\begin{proof}
The discussion above indicates that the causal structures coincide. It remains to show that also the affine structures concide in the case of a covariantly constant 1-form. This is again a result of the properties of $\tilde a_{\mu\nu}$. It can be shown (see Appendix  \ref{sec:proof_of_signature}) that the Christoffel symbols of $\tilde a_{\mu\nu}$ can be expressed in terms of the Christoffel symbols of $a_{\mu\nu}$ as 

\begin{align}
    \widetilde\Gamma^\rho_{\mu\nu} &= \Gamma^\rho_{\mu\nu} +\frac{1}{1+b^2}b^\rho \nabla_{(\mu}b_{\nu)}  - \left(a^{\rho\lambda} - \frac{1}{1+b^2}b^\rho b^\lambda\right)\left( b_\mu\nabla_{[\lambda} b_{\nu]} + b_\nu\nabla_{[\lambda} b_{\mu]}\right).
\end{align}
Hence it follows immediately that if $b_\mu$ is covariantly constant then $\widetilde\Gamma^\rho_{\mu\nu} = \Gamma^\rho_{\mu\nu}$ and the affine structure of $\tilde a_{\mu\nu}$ is the same as that of $a_{\mu\nu}$. We also known, by Prop. \ref{prop:coinciding_spray}, that the affine structure of $F$ is the same as that of $a_{\mu\nu}$. Hence the affine structure of $F$ is the same as that of $\tilde a_{\mu\nu}$.
\end{proof}

From this it follows immediately that the existence of radar neighborhoods is guaranteed \cite{Perlick2008}. \note{More precisely, given an observer and any spacetime event sufficiently close to the observer's worldline, there is (at least locally) exactly one future pointing light ray and one past pointing light ray that connect the event to the worldline of the observer.} This is of essential importance in our work, because what it essentially says is that the radar distance, calculated in Section \ref{sec:radar_distance}, is a well-defined notion. 

\subsection{Regularity and signature}

Given an \ab-metric of the form $F = \alpha \phi(s)$, with $s = \beta/\alpha$ and $\alpha = \sqrt{|A|}$, it can be shown that the determinant of the fundamental tensor is given by
\begin{align}\label{eq:ab_det_formula_main_text}
\det g_{ij} = \phi^{n+1}(\phi-s\phi')^{n-2}(\phi-s\phi' +  (\sgn(A) b^2-s^2)\phi'')\det a_{ij}.
\end{align}
The proof can be found in Appendix \ref{sec:ab_determinant}. Because of the appearance of $\sgn(A)$ the expression is slightly different from the well--known positive definite analogue, to which it reduces when $A>0$, i.e. $\sgn(A)=1$. For a modified Randers metric of the form $F = \text{sgn}(A)\alpha + |\beta|$ the function $\phi$ is given by $\phi(s) = \text{sgn}(A) + |s|$, so this reduces to 
\begin{align}
\frac{\det g}{\det a} =  \sgn(A)^{n-1}\left(\sgn(A)+|s|\right)^{n+1} = \left(\sgn(A)\frac{F}{\alpha}\right)^{n+1}.
\end{align}

Assuming the spacetime dimension $n$ is even, this means that $g$ has Lorentzian signature\footnote{The argument is the same as in the positive definitive case, using the same methods as those employed in Appendix \ref{sec:proof_of_signature}.} if and only if $\sgn(A)F>0$. Let us see what this entails. First note that $F<0$ trivially implies $A<0$. Hence $F<0$ implies Lorentzian signature. Before we move on, we should point out that this is a very satisfactory result. It means that within the entire timelike cone of $F$, the signature of the fundamental tensor is Lorentzian. Similarly, $A>0$ implies $F>0$. Hence $A>0$ also implies Lorentzian signature. What remains is the region where $A\leq 0$ and $F\geq 0$. Equivalently, $A\leq 0$ and $A+\beta^2\geq 0$. In this region, the determinant of the fundamental tensor is either undefined, is positive, or vanishes, so in any case the signature is not Lorentzian. But as this region lies outside the timelike cone, this is not a problem, as argued in section \ref{sec:Finsler_spacetimes_interpretation}.\\

It is helpful to think in terms of both the light cone of the metric $a_{ij}$ and the light cone of the metric $a_{ij}+b_ib_j$ (i.e. that of $F$). As mentioned previously, as long as $b^2>-1$, the latter metric is Lorentzian, provided the former is. That means its light cone is just a conventional one that we're familiar with from GR, just like the light cone of $a_{ij}$. The only region where the signature is \textit{not} Lorentzian, is precisely the region in between these two lightcones. Note that since $F<0$ implies $A<0$, the $F$-lightcone can never reach outside of the $a_{ij}$-light cone. The details depend on the causal character of the 1-form $\beta$ and are listed below. These properties can be checked easily by noting we may always choose coordinates such that at a given point $x\in M$ the metric $A$ has the form of the Minkowski metric and the 1-form $\beta$ has only one component (in the timelike or spacelike case) or two components (in the null case)\footnote{We recall that this can be seen as follows. First, since $a_{ij}$ is Lorentzian, it is always possible to choose coordinates such that $A$ is just the Minkowski metric at a given point $x\in M$. Writing $b^\mu = (b^0,b^1,\dots b^{n-1})$ in these coordinates, we may do a spatial rotation on the coordinates $b^1,\dots,b^{n-1}$, such that they are transformed into $(b^1,0,\dots,0)$, leaving the metric at $x$ unchanged. Then $b^\mu = (b^0,b^1,0,\dots,0)$. Now we separate the three cases. If $b^2=0$, it follows that $b^1=\pm b^0$ and by applying if necessary a spatial reflection in the $x^1$ direction we may choose either sign. If $b^2<0$ then we may go to the local rest frame by a Lorentz transformation, making $b^1=0$. If on the other hand $b^2>0$ we may perform a Lorentz transformation making $b^0=0$.}.

\begin{itemize}
\item If $\beta$ is null it is easily seen that the two lightcones intersect only for $y^\mu$ that are multiples of $b^\mu$. Thus their intersection spans a single line in the tangent space. 
\item If $\beta$ is timelike and $b^2>-1$ then the light cones do not intersect (apart from the trivial intersection in the origin).
\item If $\beta$ is spacelike (and assuming $\dim M>2$), then $a_{ij}$ induces a Lorentzian metric 
on the $(\dim M-1)$-dimensional hypersurface defined by $\beta=0$. In this case the two light cones intersect along the light cone of this induced Lorentzian metric. 
\item If $b^2 = -1$ there is only a single cone, namely the one corresponding to $\alpha$. The `light cone' corresponding to $F=0$ is now in fact a line, consisting all of multiples of $b^\mu$.  This case therefore does not have a viable physical interpretation.
\item If $b^2 < -1$ there is only a single cone, namely the one corresponding to $\alpha$. The `light cone' corresponding to $F=0$ is now non-existent, as $F=0$ has no solutions. This case therefore does not have a viable physical interpretation either.
\end{itemize}

\begin{figure}
	\centering
	\begin{subfigure}[b]{0.24\textwidth}
    	\includegraphics[width=\textwidth]{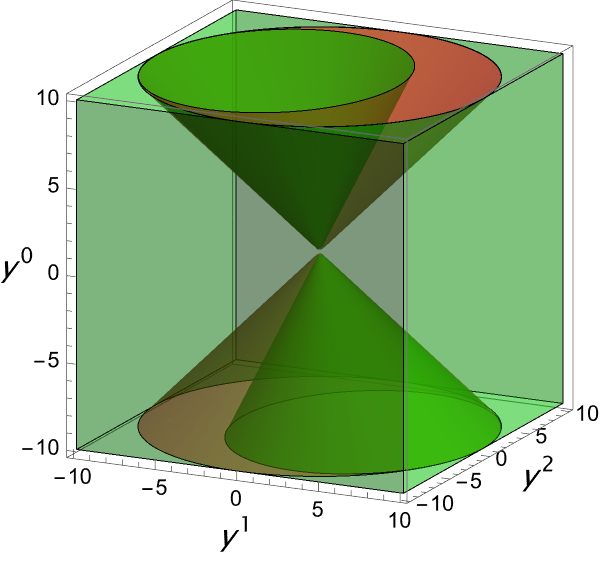}
    	\caption{Null 1-form with $\rho=0.6$}
    	\label{fig:L1}
    \end{subfigure}
    \hspace{20px}
    \begin{subfigure}[b]{0.24\textwidth}
    	\includegraphics[width=\textwidth]{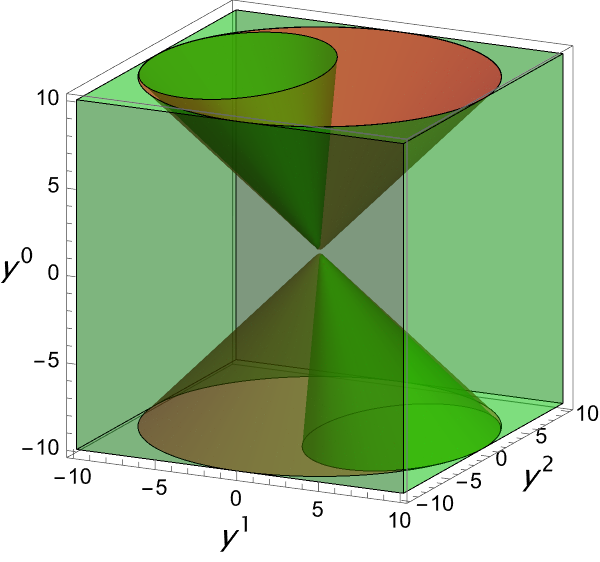}
    	\caption{Null 1-form with $\rho=1$}
    	\label{fig:L2}
    \end{subfigure}
    \hspace{20px}
    \begin{subfigure}[b]{0.24\textwidth}
    	\includegraphics[width=\textwidth]{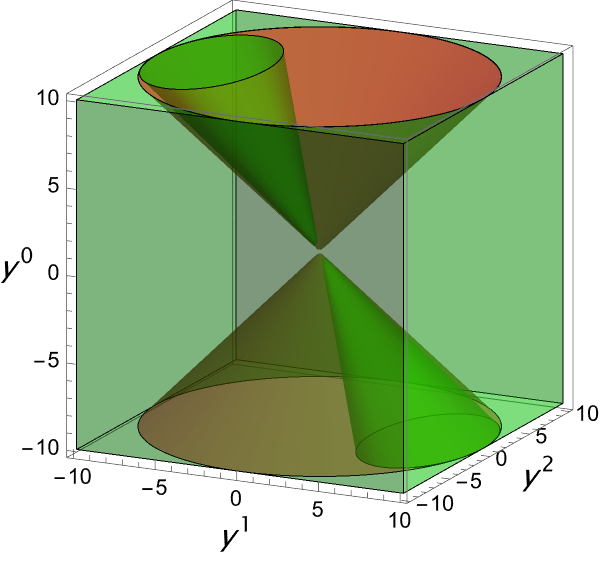}
    	\caption{Null 1-form with $\rho=1.4$}
    	\label{fig:L3}
    \end{subfigure}
    \begin{subfigure}[b]{0.24\textwidth}
    	\includegraphics[width=\textwidth]{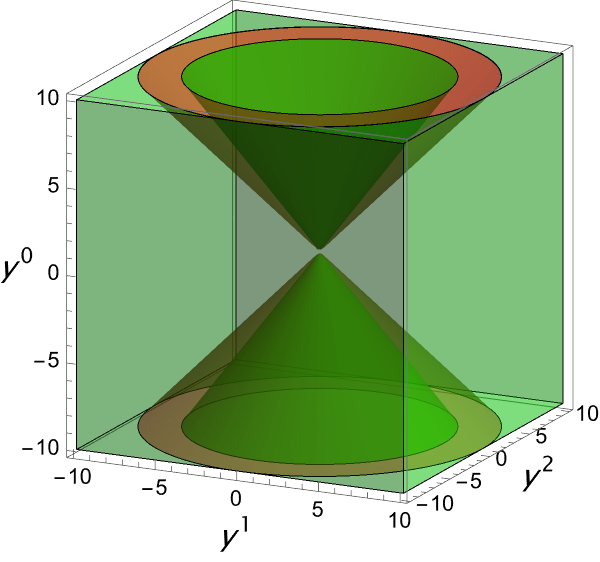}
    	\caption{Timelike 1-form with $\rho=0.65$}
    	\label{fig:T1}
    \end{subfigure}
    \hspace{20px}
    \begin{subfigure}[b]{0.24\textwidth}
    	\includegraphics[width=\textwidth]{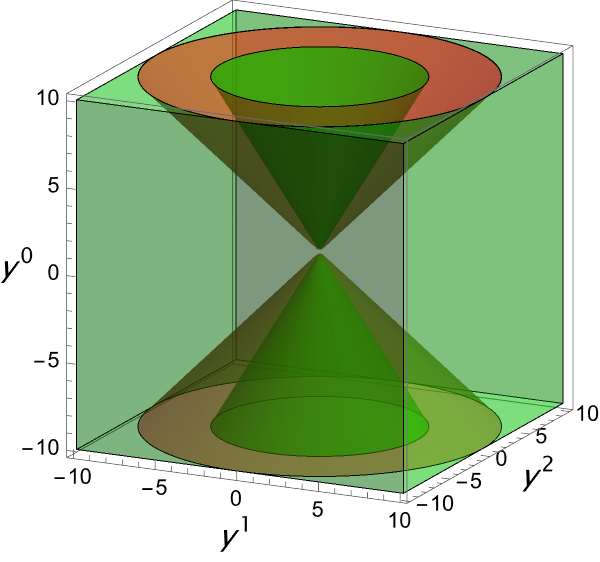}
    	\caption{Timelike 1-form with $\rho=0.8$}
    	\label{fig:T2}
    \end{subfigure}
    \hspace{20px}
    \begin{subfigure}[b]{0.24\textwidth}
    	\includegraphics[width=\textwidth]{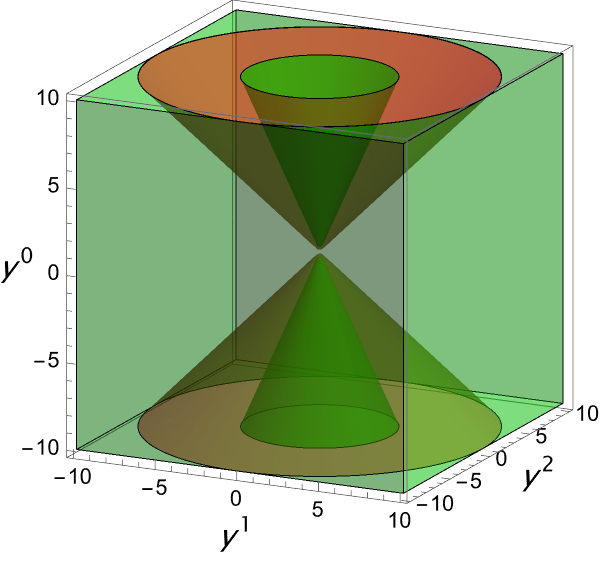}
    	\caption{Timelike 1-form with $\rho=0.9$}
    	\label{fig:T3}
    \end{subfigure}
    \begin{subfigure}[b]{0.24\textwidth}
    	\includegraphics[width=\textwidth]{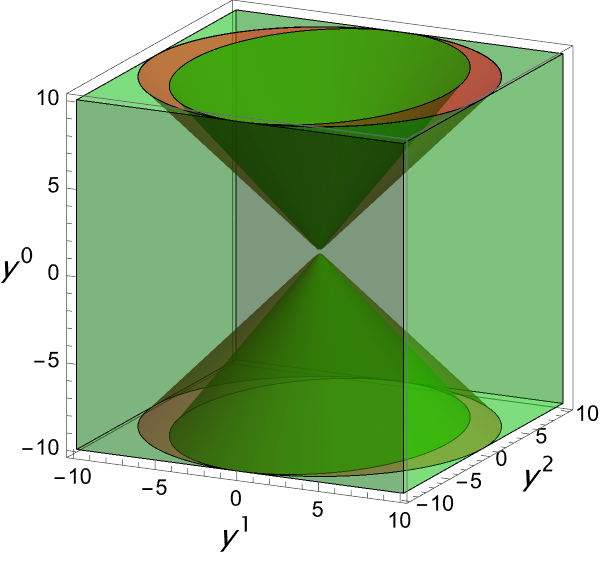}
    	\caption{Spacelike 1-form with $\rho=0.8$}
    	\label{fig:S1}
    \end{subfigure}
    \hspace{20px}
    \begin{subfigure}[b]{0.24\textwidth}
    	\includegraphics[width=\textwidth]{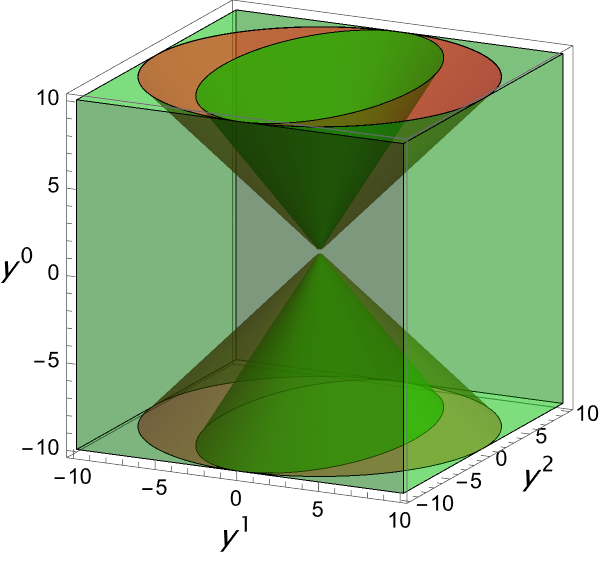}
    	\caption{Spacelike 1-form with $\rho=1.4$}
    	\label{fig:S2}
    \end{subfigure}
    \hspace{20px}
    \begin{subfigure}[b]{0.24\textwidth}
    	\includegraphics[width=\textwidth]{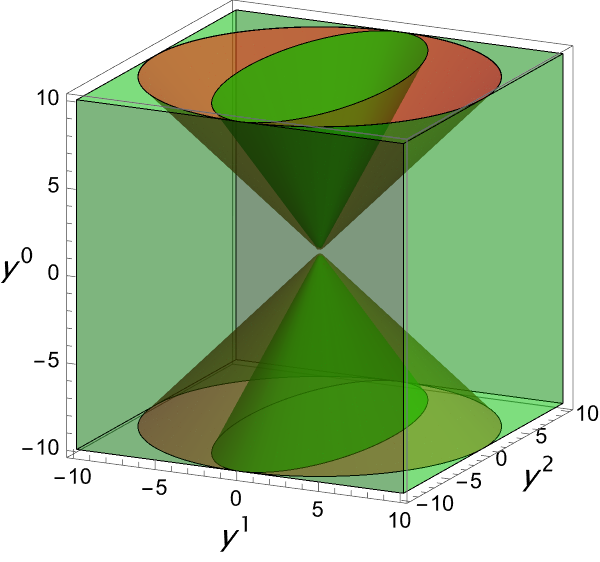}
    	\caption{Spacelike 1-form with  $\rho=2$}
    	\label{fig:S3}
    \end{subfigure}
    \begin{subfigure}[b]{0.24\textwidth}
    	\includegraphics[width=\textwidth]{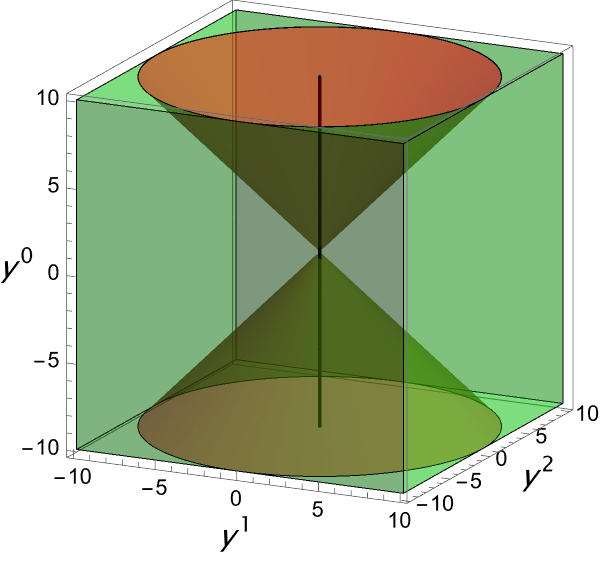}
    	\caption{Timelike 1-form with $b^2=-1$.}
    	\label{fig:T5}
    \end{subfigure}
    \hspace{20px}
    \begin{subfigure}[b]{0.24\textwidth}
    	\includegraphics[width=\textwidth]{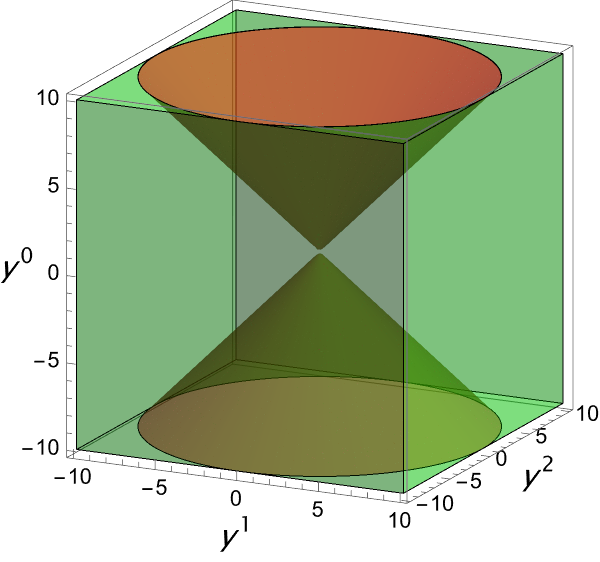}
    	\caption{Timelike 1-form with $b^2<-1$.}
    	\label{fig:T4}
    \end{subfigure}
    \caption{The figures show the lightcone and the signature of the fundamental tensor of $F = \sgn(A)\alpha + |\beta|$, where $A = -(y^0)^2+(y^1)^2+(y^2)^2+(y^3)^2$ and $\beta = \rho(y^0+y^1)$ (in the null case) or $\beta = \rho \,y^0$ (in the timelike case) or $\beta = \rho \,y^1$ (in the spacelike case) for several representative values of $\rho$, shown in the tangent space $T_xM$ at any point $x\in M$, at $y^3=0$. Green regions correspond to Lorentzian signature, red regions to non-Lorentzian signature. Fig. \ref{fig:L1} - \ref{fig:S3} show the physically reasonable scenarios, where $b^2>-1$. In that case two cones can be observed. The inner cone is the true light cone of $F$ (i.e. the set $F=0$), and the outer cone is the light cone of $a_{ij}$ (i.e. the set $A=0$). The only region with non-Lorentzian signature is precisely the gap in between the two cones. If on the other hand $b^2=-1$ (Fig. \ref{fig:T5}) then the light `cone' of $F$ is the line $y^1=y^2=y^3=0$. And if $b^2<-1$ (Fig. \ref{fig:T4}) then the light `cone' of $F$ consists only of the origin. Therefore we deem the latter two cases not physically interesting.}
\label{fig:lightcones_and_signatures}
\end{figure}

To get a better idea, Fig. \ref{fig:lightcones_and_signatures} displays the lightcones and the regions (in green) where the signature of the fundamental tensor is Lorentzian, for the modified Randers metric 
\begin{align}
F = \sgn(A)\alpha + |\beta| ,\qquad A = -(\D x^0)^2+(\D x^1)^2+(\D x^2)^2,\qquad \beta =\left\{\begin{array}{ll}
\rho \,\D x^0 & \text{if \textit{timelike}}\\ 
\rho\,(\D x^0+\D x^1) &\text{if \textit{null}}\\
\rho \,\D x^1 & \text{if \textit{spacelike}}
\end{array}\right. ,
\end{align}
for a number of representative values of the parameter $\rho$. In each subfigure, the inner lightcone is that of $F$ and the outer lightcone that of $A$. Note that for any $a_{ij}$ and $b_i$, it is always possible at any given point $x\in M$, to choose coordinates in such a way that $A$ and $\beta$ have the above form (or rather their analog in the relevant spacetime dimensionality). The following proposition summarizes these results.\\


\begin{prop}
    As long as $b^2>-1$, the signature of the fundamental tensor of $F = \sgn(A)\alpha + |\beta|$ is Lorentzian within the entire timelike cone, which is given by $F<0$. Immediately outside of the timelike cone there is a region that does not have Lorentzian signature, and further away (namely when $A>0$) the signature becomes Lorentzian again. When $b^2\leq -1$ the timelike cone is the empty set, so this case is not physically interesting.
\end{prop}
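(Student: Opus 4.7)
The plan is to exploit the determinant formula
\begin{align*}
\frac{\det g_{ij}}{\det a_{ij}} = \left(\sgn(A)\frac{F}{\alpha}\right)^{n+1},
\end{align*}
derived in the paragraph just above, together with a simple continuity argument. In the physical dimension $n=4$ a Lorentzian metric has negative determinant, and $\det a < 0$, so a necessary condition for Lorentzian signature is $\sgn(A)F > 0$. First I would go through the three possible regions and determine the sign of $\sgn(A)F$ in each: inside the $F$-timelike cone we necessarily have $A<0$ (since $|\beta|\ge 0$ forces $F<0$ to imply $-\alpha+|\beta|<0$, hence $\alpha>0$), whence $\sgn(A)F = -F > 0$; in the region $A>0$ we trivially have $F=\alpha+|\beta|>0$ and again $\sgn(A)F>0$; and in the intermediate ``annular'' region where $A\le 0$ and $F\ge 0$ we find $\sgn(A)F\le 0$, so the fundamental tensor has non-negative determinant, which in even dimension is incompatible with Lorentzian signature.

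To upgrade the necessary condition $\sgn(A)F>0$ to sufficiency I would employ a homotopy argument. Consider the family $F_t = \sgn(A)\alpha + t|\beta|$ for $t\in[0,1]$, whose fundamental tensor I denote by $g_t$. At $t=0$ one has $F_0^2 = |A|$, so $g_0 = \sgn(A)\,a_{ij}$ is manifestly Lorentzian. For any fixed $y$ in the $F$-timelike cone or in the region $\{A>0\}$, a direct check yields $\sgn(A)F_t/\alpha>0$ for every $t\in[0,1]$: in the $A>0$ case this is immediate, and in the $F$-timelike cone it follows from $\alpha>|\beta|\ge t|\beta|$, giving $-\alpha+t|\beta|<0$. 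Hence the determinant formula shows $\det g_t$ never vanishes along the deformation, no eigenvalue of $g_t$ can cross zero, and so the signature is preserved from $t=0$ to $t=1$.

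To finish I would dispose of the case $b^2\le -1$ by invoking the characterization $F<0\Leftrightarrow(a_{\mu\nu}+b_\mu b_\nu)y^\mu y^\nu<0$ that was already established in Section~\ref{sec:Randers_causality} (and which, as a direct computation shows, in fact holds regardless of the value of $b^2$). Working in a local coordinate frame in which $a_{\mu\nu}$ is the Minkowski metric, one sees by inspection (or via the matrix determinant lemma applied to $\tilde a_{\mu\nu}=a_{\mu\nu}+b_\mu b_\nu$) that $\tilde a$ is positive semidefinite with null direction $b$ when $b^2=-1$, and strictly positive definite when $b^2<-1$. In either case $\tilde a(y,y)\ge 0$ for every $y$, so the $F$-timelike cone is empty.

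The only mildly delicate step is the homotopy argument in the second paragraph, where one must verify that the strict inequality $\sgn(A)F_t>0$ holds uniformly along the entire interpolation, not just at the endpoints; once this is in place the conclusion is immediate from the continuous dependence of the eigenvalues of a symmetric matrix on a parameter.
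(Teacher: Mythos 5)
Your proposal is correct and follows essentially the same route as the paper: the determinant formula $\det g/\det a=(\sgn(A)F/\alpha)^{n+1}$, a sign analysis of $\sgn(A)F$ in the three regions, and a deformation argument to upgrade the nonvanishing of the determinant to a statement about signature (the paper relegates this last step to a footnote pointing at the eigenvalue-continuity method of Appendix~A, which your homotopy $F_t=\sgn(A)\alpha+t|\beta|$ makes explicit). Your disposal of the case $b^2\le -1$ via the auxiliary metric $\tilde a_{\mu\nu}=a_{\mu\nu}+b_\mu b_\nu$ likewise matches the paper's frame-by-frame discussion.
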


Since we only require Lorentzian signature within the timelike cone, these results are very satisfactory. \note{We point out in particular that this is true even when the 1-form is spacelike. On the other hand, the \textit{classical} Randers metric can only be considered a physically reasonable Finsler spacetime if the 1-form is either null or timelike  \cite{Voicu23_Finsler_ab_spacetime_condition}.} Finally, regarding the regularity of $F$, clearly $F$ is smooth everywhere except when $A=0$ or $\beta=0$. In particular, the set where $F$ is not smooth has measure zero. 

\section{Radar Distance for a Finsler Gravitational Wave}
\label{sec:radar_distance}

Now we are finally in the position to analyze the physical effects of a passing Finslerian gravitation wave of \ab-type. We have seen in Section \ref{sec:lin_ab_metric_is_randers} that, to first order, an \ab-metrics is equivalent to a Randers metric. And we have argued in section \ref{sec:Randers} that this should not be the standard Randers metric but rather our modified Randers metric. Thus our starting point will be the \note{linearized} gravitational wave solution of modified-Randers type. That is, we are interested in the solution  \eqref{eq:metric_naive_Randers_grav_wave} but with the conventional Randers metric $F=\alpha+\beta$ replaced by the modified Randers metric $F = \sgn(A)\alpha + |\beta|$, where $\alpha = \sqrt{|A|}$, $A = a_{ij}y^iy^j$. Note that this modification does not change any of the results pertaining to classification of solutions to the field equations, by the argument given at the beginning of section \ref{sec:Randers} that a modified Randers metric is `locally' equivalent to a standard Randers metric, in a certain precise sense. The relevant Finsler metric is therefore given by

\begin{align}\label{eq:metric_Randers_grav_wave}
F = \sgn(A)\alpha + |\beta|, \qquad\left\{\begin{array}{ll}
 A = -\D t^2 + (1+\varepsilon f_+(t-z)) \D x^2 + (1-\varepsilon f_+(t-z))\D y^2 + 2\varepsilon f_\times (t-z) \D x\,\D y+ \D z^2 \\ 
\beta = \frac{\lambda}{\sqrt{2}}\left(\D t - \D z\right)
\end{array}\right. 
\end{align}
%
%
Since actual gravitational wave measurements are done with interferometers, that effectively measure the \textit{radar distance}, the aim of this section is to compute that radar distance during the passing of a gravitational wave of the form \eqref{eq:metric_Randers_grav_wave}. \\

The setup is as follows. A light ray is emitted from some spacetime location with coordinates $(t_0, x_0, y_0, z_0)$, travels to another location in spacetime with coodinates $(t_0+\Delta t, x_0+\Delta x, y_0 + \Delta y, z_0+\Delta z)$, where it is reflected and after which it travels back to the original (spatial) location, with spacetime coordinates $(t_0+\Delta t_{\text{tot}}, x_0, y_0, z_0)$, being received there again. We are interested in the amount of proper time that passes between emission and reception of the light ray, as measured by an  `inertial' observer\footnote{In this context, we say that an observer is intertial if it would considered an intertial observer in the absence of the wave (i.e. when $f_+ = f_\times=0$). In other words, thinking of the gravitational wave as having a finite duration as it passes the Earth, an observer is inertial precisely if it is inertial before and after the wave passes.} located at spatial coordinates $(x_0, y_0, z_0)$. Because light travels forwards \textit{and} backwards during this time interval, one half of the time interval is usually called the \textit{radar distance} between the two spacetime points (sometimes the value is multiplied by the velocity of light, $c$, which we have set to 1, so that it has the dimensions of distance). In other words, the radar distance can be expressed as $R = \Delta \tau/2$.\\

\note{The expression for the radar distance of a standard gravitational wave in GR has been obtained in \cite{Rakhmanov_2009} and our calculation below follows essentially the same methods. At each step of the calculation we will clearly point out the differences with corresponding situation in GR, so that it is clear where each of the Finslerian effects (three separate effects can be identified) enters precisely. In addition to linearizing in $\varepsilon$, we also use a perturbative expansion in $\lambda$, as argued for at the end of Section \ref{sec:linearized_Randers_sols}. In fact, instead of working to first order in $\lambda$, we will work to second order in the Finslerian parameter, as certain important Finslerian effects only enter at second order, as we will see. We also neglect terms of combined order $\varepsilon\lambda^2$ and higher.}\\ 


\subsection{Finslerian null geodesics}\label{sec:geodesics}

\note{The first important observation here is that the geodesics in a Randers gravitational wave spacetime with Finsler metric $F = \text{sgn}(A)\alpha + |\beta|$ as in Eq. \eqref{eq:metric_Randers_grav_wave}  coincide with the geodesics of the GR spacetime with metric $\D s^2 = A$, because the affine connection of $F$ coincides with the Levi-Civita connection of $A$, by Prop.  \ref{prop:coinciding_spray}. For the derivation of the general form of geodesics, we may therefore assume the geometry is given by $\D s^2 = A$. Thus our point of departure is the metric 
\begin{align}
\D s^2 = A =-\D t^2 + (1+\varepsilon f_+(t-z)) \D x^2  (1-\varepsilon f_+(t-z))\D y^2 + 2\varepsilon f_\times (t-z) \D x\,\D y + \D z^2, \qquad \varepsilon\ll 1,
\end{align}
and we can essentially follow \cite{Rakhmanov_2009}. Using the coordinates $u= (t-z)/\sqrt{2}$ and $v = (t+z)/\sqrt{2}$ the geodesic equations to first order in $\varepsilon$ can be written as
\begin{align}
-\dot u \eqqcolon p_ v &= \text{const}, \label{eq:geod_1}\\
(1+\varepsilon f_+(u))\dot x +\varepsilon f_\times(u)\dot y\eqqcolon p_ x &= \text{const},\\
(1-\varepsilon f_+(u))\dot y +\varepsilon f_\times(u)\dot x\eqqcolon p_y &= \text{const}, \label{eq:geod_3}\\
\ddot v +\frac{1}{2} \varepsilon  \left(\dot{x}^2-\dot{y}^2\right) f_+'(u) + \varepsilon f_\times '(u)\dot x\dot y &= 0,
\end{align}
where \eqref{eq:geod_1}-\eqref{eq:geod_3} are obtained as first integrals, using the fact that the Lagrangian corresponding to $A$ is independent of the coordinates $v,x,y$. The first three equations can be rewritten to first order as
\begin{align}
\dot u = -p_v,\qquad \dot x = (1-\varepsilon f_+(u))p_x - \varepsilon f_\times(u)p_y,\qquad \dot y = (1+\varepsilon f(u))p_y- \varepsilon f_\times(u)p_x,
\end{align}
and can be integrated (with respect to an affine parameter $\sigma$, chosen without loss of generality such that $\dot u = 1$) to
\begin{align}
u = u_0 + \sigma,\quad x = x_0 + \sigma\left[\left(1 - \varepsilon \bar f_+(\sigma)\right)p_x- \varepsilon \bar f_\times(u)p_y\right],\quad y = y_0 + \sigma\left[\left(1 + \varepsilon \bar f_+(\sigma)\right)p_y- \varepsilon \bar f_\times(u)p_x\right],
\end{align}
where
\begin{align}
\bar f_{+,\times}(\sigma) \equiv \frac{1}{\sigma}\int_0^\sigma f_{+,\times}(u_0+\sigma)\D\sigma
\end{align}
is the averaged value of $f_{+,\times}$. The equation for $v$ can be integrated to
\begin{align}
\dot v = -\tilde p_{u0}-\frac{1}{2} \varepsilon  \left(p_x^2-p_y^2\right) f_+(u_0+\sigma) - \varepsilon f_\times (u_0+\sigma)p_x p_y
\end{align}
where $\tilde p_{u0} = p_{u0}-\frac{\varepsilon}{2}(p_x^2-p_y^2)f(u_0)- \varepsilon f_\times (u_0)p_x p_y$, $p_u = -\dot v$ (not necessarily constant) and $p_{u0}$ is its initial value at $\sigma = 0$. Integrating once again, we obtain
\begin{align}
v = v_0 - \tilde p_{u0}\sigma-\frac{1}{2}\varepsilon (p_x^2-p_y^2)\sigma \bar f_+(\sigma)- \varepsilon \bar \sigma f_\times (\sigma)p_x p_y.
\end{align}

Any geodesic emanating from a given point $x_0^\mu$ can thus be described by the following parameterized path, for certain values of $p_x, p_y$ and $\tilde p_{u0}$:

\begin{align}
u(\sigma) &= u_0 + \sigma, \label{eq:geod_prim_u}\\
x(\sigma) &= x_0 + \sigma\left[\left(1 - \varepsilon \bar f_+(\sigma)\right)p_x- \varepsilon \bar f_\times(\sigma)p_y\right],\label{eq:geod_prim_x}\\
y(\sigma) &= y_0 + \sigma\left[\left(1 + \varepsilon \bar f_+(\sigma)\right)p_y- \varepsilon \bar f_\times(\sigma)p_x\right],\label{eq:geod_prim_y}\\
v(\sigma) &= v_0- \tilde p_{u0}\sigma-\frac{1}{2}\varepsilon (p_x^2-p_y^2)\sigma \bar f_+(\sigma)- \varepsilon \sigma\bar f_\times (\sigma)p_x p_y.\label{eq:geod_prim_v}
\end{align}

We need to know the specific expression for \textit{null} geodesics, however. The modified null condition or modified dispersion relation (MDR) for massless particles, $F=0$, is the first place where the Finslerian character of the gravitational wave enters. According to Section \ref{sec:Randers_causality}, the condition $F=0$ is equivalent to $A = -\beta^2$, i.e.
\begin{align}
    -2\dot u\dot v + (1+\varepsilon f_+(u))\dot x^2 + (1-\varepsilon f_+(u))\dot y^2 +2\varepsilon f_\times(u) \dot x\dot y= -\beta^2 = -\lambda^2\dot u^2,
\end{align}
which, after substituting \eqref{eq:geod_prim_u}-\eqref{eq:geod_prim_v}, becomes $2\tilde p_{u0} + p_x^2+p_y^2=-\lambda^2$. We may therefore eliminate $\tilde p_{u0}$ and directly substitute this into the expression \eqref{eq:geod_prim_v} for $v(\sigma)$. A null geodesic starting at $(u_0,x_0,y_0,v_0)$ at $\sigma=0$ can therefore be described by the following parameterized path,
\begin{align}
u &= u_0+\sigma, \\
x &= x_0 + \sigma\left[\left(1 - \varepsilon \bar f_+(\sigma)\right)p_x- \varepsilon \bar f_\times(u)p_y\right],\\
y &= y_0 + \sigma\left[\left(1 + \varepsilon \bar f_+(\sigma)\right)p_y- \varepsilon \bar f_\times(u)p_x\right],\\
v &= v_0 + \frac{\sigma}{2}(p_x^2+p_y^2+\lambda^2)-\frac{1}{2}\varepsilon (p_x^2-p_y^2)\sigma \bar f_+(\sigma)- \varepsilon \sigma \bar f_\times (\sigma)p_x p_y.
\end{align}

Here we can make two important observations:
\begin{enumerate}
\item the effect due to the MDR or modified null condition enters at order $\lambda^2$;
\item in the limit $\lambda\to 0$ we recover the null geodesics for a standard gravitational wave in GR \cite{Rakhmanov_2009}.
\end{enumerate}
}

\subsection{Radar distance}

\note{Next we plug in the boundary conditions at the receiving point, $(u_0+\Delta u, x_0+\Delta x, y_0 + \Delta y, v_0+\Delta v)$. Note that $\sigma = \Delta u$ at that point, and hence from the middle two equations we infer that
\begin{align}
p_x = \frac{\Delta x}{\Delta u}\left(1 + \varepsilon \bar f_+(\Delta u)\right) +  \varepsilon \bar f_\times(\Delta u)\frac{\Delta y}{\Delta u}, \qquad p_y = \frac{\Delta y}{\Delta u}\left(1 - \varepsilon \bar f_+(\Delta u)\right)+  \varepsilon \bar f_\times(\Delta u)\frac{\Delta x}{\Delta u},
\end{align}
Plugging this into the $v$ equation yields
\begin{align}
2\Delta u\Delta v = \Delta x^2 (1+\varepsilon \bar f_+(\Delta u)) + \Delta y^2 (1-\varepsilon \bar f_+(\Delta u)) + 2\varepsilon \bar f_\times \Delta x\Delta y + \lambda^2\Delta u^2,
\end{align}
or equivalently,
\begin{align}
\left(1-\frac{\lambda^2}{2}\right)\Delta t^2 = \left(1 + \varepsilon \bar f(\Delta u)\right)\Delta x^2 + \left(1 - \varepsilon \bar f(\Delta u)\right)\Delta y^2 + 2\varepsilon \bar f_\times(\Delta u) \Delta x\Delta y +\left(1+\frac{\lambda^2}{2}\right)\Delta z^2 - \lambda^2\Delta z\Delta t,
\end{align}
where we have used that $-2\Delta u\Delta v = -\Delta t^2+\Delta z^2$.
This equation is solved to first order in $\varepsilon$ and $\lambda^2$ (neglecting $\varepsilon\lambda^2$ terms) by\footnote{In addition to this solution there is, formally, another solution to the equation. However, this other solution has the wrong zeroth order term, namely a negative one, which renders it physically irrelevant.}
\begin{align}\label{eq:delta_t_outgoing}
\Delta t = \Delta \ell + \left(\frac{\Delta x^2 - \Delta y^2 }{2\Delta\ell}\right)\varepsilon \bar f_+(\Delta u) + \left(\frac{\Delta x \Delta y }{\Delta\ell}\right)\varepsilon \bar f_\times(\Delta u)+ \frac{1}{2}\left(\frac{\Delta x^2 + \Delta y^2 + 2 \Delta z^2}{2\Delta\ell} - \Delta z\right)\lambda^2,
\end{align}
where $\Delta \ell \equiv \sqrt{\Delta x^2 + \Delta y^2 + \Delta z^2}$.\\

The right hand side in principle still depends on $t$ though, via $\bar f(\Delta u)$, so this is not yet a closed formula for $\Delta t$. However, since $\bar f$ only appears together with $\varepsilon$, and since we are only interested in the first order expression for $\Delta t$, any zeroth order expression for $\bar f$ suffices in this formula. We have

\begin{align}
\bar f(\Delta u)  &= \frac{1}{\Delta u}\int_0^{\Delta u} f(u_0+\sigma)\D\sigma = \frac{\sqrt{2}}{\Delta t - \Delta z}\int_0^{(\Delta t - \Delta z)/\sqrt{2}} f(u_0+\sigma)\D\sigma \\
&= \frac{\sqrt{2}}{\Delta \ell - \Delta z}\int_0^{(\Delta \ell - \Delta z)/\sqrt{2}} f(u_0+\sigma)\D\sigma + \mathcal O(\varepsilon)\\
&= \frac{\sqrt{2}}{\Delta \ell - \Delta z}\int_0^{(\Delta \ell - \Delta z)/\sqrt{2}} f\left(\frac{1}{\sqrt{2}}(t_0-z_0)+\sigma\right)\D\sigma + \mathcal O(\varepsilon)\label{eq:avg_perturbation_zeroth_order}
\end{align}
since $\Delta t = \Delta \ell + \mathcal O(\varepsilon)$. We introduce another symbol for this expression, namely
\begin{align}\label{eq:avg_perturbation_zeroth_order_tz}
\bar f(\Delta \ell, \Delta z, t_0-z_0)\equiv \frac{\sqrt{2}}{\Delta \ell - \Delta z}\int_0^{(\Delta \ell - \Delta z)/\sqrt{2}} f\left(\frac{1}{\sqrt{2}}(t_0-z_0)+\sigma\right)\D\sigma,
\end{align} 
where the explicit display of the arguments serves to remind us that $\bar f$ depends only on $\Delta \ell, \Delta z$ and the initial value of $t-z$. Since $\varepsilon \bar f(\Delta u)  = \varepsilon \bar f(\Delta \ell, \Delta z, t_0-z_0) + \mathcal O(\varepsilon^2)$, it follows that we can rewrite Eq. \eqref{eq:delta_t_outgoing}, to first order in $\varepsilon$ and $\lambda^2$, as
\begin{align}
\Delta t = \Delta \ell &+ \left(\frac{\Delta x^2 - \Delta y^2 }{2\Delta\ell}\right)\varepsilon \bar f_+(\Delta \ell, \Delta z, t_0-z_0)+ \left(\frac{\Delta x \Delta y }{\Delta\ell}\right)\varepsilon \bar f_\times(\Delta \ell, \Delta z, t_0-z_0) \\
&+ \frac{1}{2}\left(\frac{\Delta x^2 + \Delta y^2 + 2 \Delta z^2}{2\Delta\ell} - \Delta z\right)\lambda^2, \label{eq:Randers_time_elapsed_single_trip}
\end{align}
which is a closed expression for the elapsed coordinate time $\Delta t$ interval for a light ray traveling a certain spatial coordinate distance, in terms of the spatial coordinate separations and the initial value of $t-z$. \\

Now let's consider the complete trip, from $x^\mu_0$ to $x^\mu_0 + \Delta x^\mu$ and `back'. The total coordinate time elapsed during this trip is the sum of the forward trip and the backward trip time intervals. Schematically:
\begin{align}
\Delta t_\text{tot} &= \Delta t(\Delta x,\Delta y,\Delta z,t_0-z_0) + \Delta t(-\Delta x, -\Delta y, -\Delta z,t_0+\Delta t-(z_0+\Delta z)),
%
\end{align}
since the spatial interval on the backward trip is simply minus the forward spatial interval, and the `initial' value of $t-z$ for the backward trip is just the final value $t_0 -z_0 + \Delta t -\Delta z$ corresponding to the forward trip. Plugging in \eqref{eq:Randers_time_elapsed_single_trip} yields
\begin{align}\label{eq:Randers_time_elapsed_total_trip}
\Delta t_\text{tot} = 2\Delta \ell &+ \varepsilon\left(\frac{\Delta x^2 - \Delta y^2 }{2\Delta\ell}\right) \bar f_{+,\text{tot}} + \varepsilon \left(\frac{\Delta x \Delta y }{\Delta\ell}\right) \bar f_{\times,\text{tot}} \nonumber\\
&+ \frac{1}{2}\lambda^2\left(\frac{\Delta x^2 + \Delta y^2 + 2 \Delta z^2}{\Delta\ell} \right),
\end{align}
where $\bar f_{+,\text{tot}} = \bar f_{+,\text{forward}} + \bar f_{+,\text{backward}}$ and similarly for the $\times$-polarization, in terms of the forward and backward averaged amplitudes, respectively, given by
\begin{align}
\bar f_{+,\times,\text{forward}} &= \bar f_{+\times,}(\Delta \ell, \Delta z, t_0-z_0)  \\
&= \frac{\sqrt{2}}{\Delta \ell - \Delta z}\int_0^{(\Delta \ell - \Delta z)/\sqrt{2}} f_{+,\times}\left(\frac{1}{\sqrt{2}}(t_0-z_0)+\sigma\right)\D\sigma, \label{eq:bar_f_forward}\\
\bar f_{+,\times,\text{backward}} &= \bar f_{+,\times}(\Delta \ell, -\Delta z, t_0-z_0 + \Delta t - \Delta z) \\
&= \frac{\sqrt{2}}{\Delta \ell + \Delta z}\int_0^{(\Delta \ell + \Delta z)/\sqrt{2}} f_{+,\times}\left(\frac{1}{\sqrt{2}}(t_0+\Delta\ell-z_0 - \Delta z)+\sigma\right)\D\sigma, \label{eq:bar_f_backward}
\end{align}
where in the last expression we have replaced $\Delta t$ by $\Delta \ell$ in the argument of $f_{+,\times}$, because to zeroth order this makes no difference, and only the zeroth order expression for $\bar f_{+,\text{backward}}$ is relevant because $\bar f_{+,\text{backward}}$ always appears multiplied with $\varepsilon$ in the expressions we care about, like $\Delta t_\text{tot}$.\\


Equation \eqref{eq:Randers_time_elapsed_total_trip} gives the total coordinate time elapsed during the trip forward and back. The next step in the calculation of the radar distance $R = \Delta \tau/2$ is to convert the coordinate time interval into the proper time interval measured by the stationary observer local to the emission and reception of the light ray.  This is where a second Finslerian effect enters. For such a stationary observer we have $x=y=z=const$ and hence the 4-velocity is given by $(\dot t,0,0,0)$, where we will assume without loss of generality that $\dot t>0$. The proper time measured by an observer is given by the Finslerian length along its worldline $\Delta \tau =-\int F\, \D \sigma$. If we use $\sigma = \tau$ as our curve parameter, differentiating with respect to it shows that $F$ should be normalized as $F=-1$. This is the Finsler equivalent of the fact that in GR the worldline of a particle parameterized proper-time should always satisfy $g_{\mu\nu}\dot x^\mu \dot x^\nu=-1$ (or $+1$, depending on the signature convention). In the case of our observer the condition becomes
\begin{align}
F = \text{sgn}(A)\alpha + |\beta| = \text{sgn}(-\dot t^2)\sqrt{|\dot t^2|} + \frac{|\lambda\dot t|}{\sqrt{2}} = -|\dot t| + \frac{|\lambda\dot t|}{\sqrt{2}} = \left(-1+\frac{\lambda }{\sqrt{2}}\right)\dot t \stackrel{!}{=} -1.
\end{align}
It follows that
\begin{align}\label{eq:proper_time_vs_coord_time}
\Delta \tau = \left(1-\frac{\lambda }{\sqrt{2}}\right)\Delta t_\text{tot}
\end{align}
along the worldine of the stationary observer. Plugging in Eq. \eqref{eq:Randers_time_elapsed_total_trip} and \eqref{eq:proper_time_vs_coord_time} into $R = \Delta \tau/2$ we conclude that, to first order in $\varepsilon$ and second order in $\lambda$, the radar distance is given by
\begin{align}
%
%
R = \left(1 - \frac{\lambda }{\sqrt{2}}\right)\Delta \ell &+ \varepsilon\left(1 - \frac{\lambda }{\sqrt{2}}\right)\left(\frac{\Delta x^2 - \Delta y^2 }{4\Delta\ell}\right) \bar f_{+,\text{tot}} +\left(1 - \frac{\lambda }{\sqrt{2}}\right)\left(\frac{\Delta x \Delta y }{2\Delta\ell}\right)\varepsilon \bar f_{\times,\text{tot}}  + \frac{\lambda^2}{4}\left(\Delta\ell + \frac{\Delta z^2}{\Delta\ell} \right).\label{eq:Randers_Radar_Distance}
\end{align}
This expresses the radar distance as a function of the spatial coordinate distances and the initial value of $t-z$ (the latter enters the expression via $\bar f_{+,\times,\text{tot}}$). In the limit $\lambda\to 0$ we recover the expression for the radar distance in the case of a standard gravitational wave in GR \cite{Rakhmanov_2009}:

\begin{align}
R = \Delta \ell + \varepsilon\left(\frac{\Delta x^2 - \Delta y^2 }{4\Delta\ell}\right) \bar f_{+,\text{tot}} + \varepsilon \left(\frac{\Delta x \Delta y }{2\Delta\ell}\right) \bar f_{\times,\text{tot}} + \mathcal O (\varepsilon^2).\label{eq:GR_radar_distance}
\end{align}
}

Before we move on, let us summarize in what ways the Finslerian parameter $\lambda$ has entered our analysis so far:
\begin{enumerate}
\item The null trajectories are altered due to the fact the Finsler metric induces a modified null condition or MDR. As a result, it takes a \textit{larger} coordinate time interval for a light ray to travel a given spatial coordinate distance. This effect works in all spatial directions, even the direction parallel to the propagation direction of the light ray. This effect enters at order $\lambda^2$.
\item The ratio of proper time and coordinate time is altered with the result that \textit{less proper time is experienced per unit coordinate time} . This effect enters at order $\lambda$.
\end{enumerate}

There is, however, a third way in which the parameter enters. Namely in the relation between the coordinate distance and radar distance \textit{in the absence of the wave}. For a gravitational wave in GR these conveniently coincide; in the case of our Randers waves they don't. The formula for the radar distance derived above refers merely to coordinates. In order to make sense of the result, we would like to express the right hand side in terms of measurable quantities, like the radar distances in the various directions in the absence of the wave. Employing Eq. \eqref{eq:Randers_Radar_Distance} we write
\begin{align}
\Delta X = \left(1 - \frac{\lambda }{\sqrt{2}}\right)\Delta x + \frac{\lambda^2}{4}\Delta x, \\
\Delta Y = \left(1 - \frac{\lambda }{\sqrt{2}}\right)\Delta y + \frac{\lambda^2}{4}\Delta y , \\
\Delta Z = \left(1 - \frac{\lambda }{\sqrt{2}}\right)\Delta z + \frac{\lambda^2}{2}\Delta z ,
\end{align}
for the radar distance in the $x,y$ and $z$ direction \textit{in the absence of the wave}, and 
\begin{align}
R_0 = \left(1 - \frac{\lambda }{\sqrt{2}}\right)\Delta \ell + \frac{\lambda^2}{4}\left(\Delta\ell + \frac{\Delta z^2}{\Delta\ell} \right),
\end{align}
for the radar distance \eqref{eq:Randers_Radar_Distance} in the relevant direction \textit{in the absence of the wave}. Eliminating the coordinate distances in favour of the physical radar distances by virtue of the inverse transformations, valid to second order in $\lambda$,
\begin{align}
\Delta x &= \Delta X\left(1 + \frac{\lambda }{\sqrt{2}} + \frac{\lambda^2}{4}\right)\\
\Delta y &= \Delta Y\left(1 + \frac{\lambda }{\sqrt{2}} + \frac{\lambda^2}{4}\right)\\
\Delta z &= \Delta Z\left(1 + \frac{\lambda }{\sqrt{2}}\right) \\
\Delta \ell &= R_0\left(1 + \frac{\lambda }{\sqrt{2}} + \frac{3}{4}\lambda^2\right) - \frac{\Delta z^2}{4 R_0}\lambda^2 \\
&= R_0\left(1 + \frac{\lambda }{\sqrt{2}} + \frac{\lambda^2}{4}\right) - \frac{\Delta Z^2}{4 R_0}\lambda^2
\end{align}
we can express the radar distance in the presence of the wave as

\begin{align}
\boxed{
R =  R_0 + \varepsilon\left(\frac{\Delta X^2 - \Delta Y^2 }{4R_0}\right)\bar f_{+,\text{tot}} + \varepsilon\left(\frac{\Delta X\Delta Y }{2R_0}\right)\bar f_{\times,\text{tot}}  + \mathcal O(\varepsilon^2, \lambda^3, \varepsilon\lambda^2).
}
\end{align}

This is a remarkable result. By expressing the radar distance in terms of the physical observables $\Delta X,\Delta Y$ and $R_0$ rather than merely coordinates, 
all dependence on $\lambda$ has disappeared to the desired order and the expression is identical to its GR counterpart, Eq. \eqref{eq:GR_radar_distance}! We must conclude, therefore, that the effect of a Randers gravitational wave on interferometer experiments is virtually indistinguishable from that of a conventional GR gravitational wave. \\

It is important to remark that by no means this implies that all phenomena in such a Finsler spacetime are identical to their GR counterparts. It might be possible to detect the presence of a non-vanishing $\lambda$ by some other means. This is a very interesting and important questions, however it is beyond the scope of this article and something to explore in future work. Our results pertain merely to gravitational wave effects as observed by interferometers.

\section{Discussion}
\label{sec:discussion}

The main aim of this paper was to study the physical effect of Finslerian gravitational waves and, in particular, to investigate the question if and how such waves can be distinguished, observationally, from the classical gravitational waves of general relativity. To this effect we have derived an expression for the radar distance at the moment a Finsler gravitational passes, say, the earth. This radar distance is the main observable that is measured by interferometers. Remarkably, we have found that the expression for the radar distance is indistinguishable from its non-Finslerian counterpart, leading us to conclude that interferometer experiments would not be able to distinguish between a general relativistic and a Finslerian gravitational wave, at least not with regards to the radar distance. This is on the one hand disappointing, since indicates means we cannot use such measurements to test the Finslerian character of our spacetime. On the other hand, though, it means that the current gravitational wave measurements are all compatible with the idea that spacetime has a Finslerian nature. To the best of our knowledge this is the first time an explicit expression for the Finslerian Radar length has been obtained for the case of finite spacetime separations, and as such our work may be seen as a proof of concept. Repeating the analysis for other Finsler spacetime geometries may lead to additional insight as to the observational signature of Finsler gravity. \\

\note{It is important to point out that Finslerian effects may also play a role in the \textit{generation} of gravitational waves in, say, a black hole merger event. This could lead to a Finslerian correction to the waveform and this could be measured in interferometer experiments, at least in principle. In order to be able to investigate this, however, Finslerian black hole solutions need to be better understood. A start in this direction has been made in \cite{cheraghchi2022fourdimensional}, where all 4-dimensional spherically symmetric Finsler metrics of Berwald type have been classified.}\\

The other parts of the article, leading up to the calculation of the radar length, were more mathematical in nature. We have introduced a class of exact solutions to the field equation in Finsler gravity that have a close resemblance to the well-known general relativistic pp-waves, and that generalize all of the pp-wave-type solutions currently known in the literature \cite{Fuster:2015tua, Fuster:2018djw, Heefer_2021}. These solutions are \ab-metrics, where $\alpha$ is a classical pp-wave and $\beta$ is its  defining covariantly constant null 1-form. Consequently our solutions are of Berwald type. Their linearized versions, we have shown, may be interpreted as Finslerian gravitational waves of modified Randers type.\\

Indeed, along the way we have introduced a small modification to the standard definition the Randers metric, motivated by the observation that the physical interpretation of the causal structure of the standard Randers metric is not immediately obvious. In contrast, we have shown that our modified Randers metrics have the nice property that their causal structure is completely equivalent to the causal structure of some auxiliary \mbox{(pseudo-)Riemannian} metric, hence leading to a perfectly clear physical interpretation. We stress that this auxilliary metric is different from the \textit{defining} \mbox{(pseudo-)Riemannian} metric $\alpha$. In the special case that the defining 1-form of the Randers metric is covariantly constant (which is the case, for example, for our solutions) we have even more satifactory results. In this case not only the causal structure, but also the affine structure of the Randers metric coincides with that of the auxilliary (pseudo)-Riemannian metric, i.e. the timelike, spacelike and null geodesics of the Finsler metric can be understood, respectively, as the timelike, spacelike and null geodesics of the auxiliary (pseudo)-Riemannian metric. A particularly nice consequence of this is the guaranteed existence of radar neighborhoods, i.e. that given an observer and any event in spacetime, there is (at least locally) exactly one future pointing light ray and one past pointing light ray that connect the worldline of the observer to the event. This is of essential importance in our work, because without this property it would have not been possible to perform the calculation of the radar distance in the last part of the article, simply because the notion of radar distance would not even make sense in that case. \\

Let us now point out some of the limitations of our investigation. First of all, it is by no means expected that the Finslerian gravitational waves discussed here should be only possible ones. Although being much larger than even the complete class of \textit{all} Lorentzian (i.e. non-Finslerian) geometries, the class of \ab-metrics of Berwald type, to which we have restricted our analysis, is still quite restrictive in the large scheme of (Finsler geometric) things. \note{Moreover, even within the class of $(\alpha,\beta)$-metrics, our analysis is only valid for those metrics that can be regarded as `close' to a Lorentzian metric, such that they can be approximated by Randers metrics.} So even though our results suggest that there is no observable difference between the Finslerian gravitational waves discussed in this article and their GR counterparts, there might be more general types of Finslerian gravitational waves that \textit{could} be distinguished observationally from the general relativistic ones by means of interferometer experiments. Furthermore, radar distance experiments are by no means the only way of probing our spacetime geometry. It might be possible to detect the Finslerian character of spacetime in some other way. We have not explored this possibility here, but we plan to investigate this in the future.\\

Moreover, we have assumed in our calculations that the amplitude of the gravitational waves as well as the Finslerian deviation from general relativity are sufficienty small such that a perturbative approach to first order in the former and second order in the latter is valid. It would be of interest to repeat the calculation to higher order in perturbation theory. We expect that this would in principle be a straightforward, yet possibly tedious, exercise.



\begin{acknowledgments}
S.H. wants to thank Rick Sengers and Nicky van den Berg for fruitful discussions and for their input with regards to the figures. S.H. also wants to thank Luc Florack for fruitful discussions, in particular his suggestions with regards to perturbation theory. We would like to acknowledge networking support by the COST Action CA18108, supported by COST (European Cooperation in Science and Technology).
\end{acknowledgments}
\appendix

\section{Some Properties of the Metric $a_{\mu\nu}+b_\mu b_\nu$}
\label{sec:proof_of_signature}

\subsection{Proof of Lorentzian signature}
Here we prove that if $a_{\mu\nu}$ has Lorentzian signature and $a_{\mu\nu}b^\mu b^\nu>-1$ then $\tilde a_{\mu\nu} = a_{\mu\nu}+b_\mu b_\nu$ also has Lorentzian signature. We write $b^2 = a_{\mu\nu}b^\mu b^\nu$. First, the matrix determinant lemma says that 
\begin{align}
\det \tilde a= (1+b^2)\det a.
\end{align}
As long as $b^2>-1$ this implies that $\det \tilde a$ has the same sign as $\det a$, so assuming $a_{\mu\nu}$ is Lorentzian, $\tilde a$ has negative determinant. In 4D this immediately implies that $\tilde a$ is Lorentzian (although the signs of the eigenvalues might be flipped with respect to $a_{\mu\nu}$). However, let's assume the dimensionality is arbitrary. Consider the family of 1-forms $b^{(\eta)}_\mu= \eta b_\mu$, where $\eta\in[0,1]$. For any $\eta$ we have
\begin{align}
\det \widetilde {a^{(\eta)}} = \left[1+\left(b^{(\eta)}\right)^2\right]\det a = \left[1+\eta^2 b^2\right]\det a,
\end{align}
$\det \widetilde {a^{(\eta)}}$ has the same sign for all values of $\eta$. 
Now since each of the $n$ eigenvalues of $\widetilde {a^{(\eta)}}$ can be expressed as continuous function of $\eta$, it follows that the respective signs of the $n$ eigenvalues cannot change when we change $\eta$. To see why, suppose that the $k$-th eigenvalue is positive for some $\eta_1$ and negative for some $\eta_2$. By the intermediate value theorem, there must exist some $\eta$ between $\eta_1$ and $\eta_2$ for which the eigenvalue vanishes. In that case the determinant vanishes for that value of $\eta$, which is a contradiction, as the determinant never vanishes as we have just seen. This argument proves that $\widetilde {a^{(\eta)}}$ has the same signature for all values of $\eta$, because the signs of the eigenvalues remain unchanged. In particular, $\tilde a = \widetilde {a^{(1)}}$ has the same signature as $a = \widetilde {a^{(0)}}$. Therefore, if $a_{\mu\nu}$ is Lorentzian and $b^2>-1$ then $\tilde a$ is Lorentzian as well.\\

\subsection{Affine structure}

Here we derive an explicit formula for the Christoffel symbols of the metric $\tilde a_{\mu\nu} = a_{\mu\nu}+b_\mu b_\nu$, where it is again assumed that $b^2>-1$.

\begin{prop}
The Christoffel symbols of $\tilde a_{\mu\nu}$ can be expressed as
\begin{align}
    \widetilde\Gamma^\rho_{\mu\nu} &= \Gamma^\rho_{\mu\nu} +\frac{1}{1+b^2}b^\rho \nabla_{(\mu}b_{\nu)}  - \left(a^{\rho\lambda} - \frac{1}{1+b^2}b^\rho b^\lambda\right)\left( b_\mu\nabla_{[\lambda} b_{\nu]} + b_\nu\nabla_{[\lambda} b_{\mu]}\right).
\end{align}
where $\nabla$ is the covariant derivative corresponding to $a_{\mu\nu}$
\end{prop}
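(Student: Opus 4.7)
My plan is to proceed via the standard tensorial formula for the difference of Christoffel symbols associated with two metrics that share the same underlying manifold. The first ingredient I need is an explicit formula for the inverse $\tilde a^{\mu\nu}$. Since $\tilde a_{\mu\nu} = a_{\mu\nu} + b_\mu b_\nu$ is a rank-one update of $a_{\mu\nu}$, the Sherman--Morrison formula gives
\begin{align}
\tilde a^{\mu\nu} = a^{\mu\nu} - \frac{1}{1+b^2}\,b^\mu b^\nu,
\end{align}
which is well defined since $b^2 > -1$. A quick check that $\tilde a^{\mu\rho}\tilde a_{\rho\nu}=\delta^\mu_\nu$ validates this.

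The second ingredient is the fact that for any two torsion-free linear connections on $M$, their difference is a tensor. Applying this to the Levi-Civita connections of $a_{\mu\nu}$ and $\tilde a_{\mu\nu}$ yields the identity
\begin{align}
\widetilde\Gamma^\rho_{\mu\nu} - \Gamma^\rho_{\mu\nu} = \tfrac{1}{2}\tilde a^{\rho\sigma}\bigl(\nabla_\mu \tilde a_{\sigma\nu} + \nabla_\nu \tilde a_{\sigma\mu} - \nabla_\sigma \tilde a_{\mu\nu}\bigr),
\end{align}
where $\nabla$ is the Levi-Civita connection of $a_{\mu\nu}$. Because $\nabla a = 0$, one has $\nabla_\mu \tilde a_{\nu\rho} = (\nabla_\mu b_\nu)b_\rho + b_\nu \nabla_\mu b_\rho$, so the entire right-hand side is expressible in terms of $\nabla b$, which is the key simplification.

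Carrying out this substitution and regrouping, I expect the three terms $\nabla_\mu \tilde a_{\sigma\nu} + \nabla_\nu \tilde a_{\sigma\mu} - \nabla_\sigma \tilde a_{\mu\nu}$ to reorganize cleanly into a symmetric piece proportional to $b_\sigma \nabla_{(\mu}b_{\nu)}$ together with two antisymmetric pieces of the form $b_\nu \nabla_{[\sigma}b_{\mu]}$ and $b_\mu \nabla_{[\sigma}b_{\nu]}$ (with appropriate signs). The reason this split is natural is that the only symmetric combination that can survive the $\mu\leftrightarrow\nu$ symmetry of the Christoffel symbols carries the free $\sigma$ index on $b$, while the remaining terms are forced to be antisymmetric in their contracted pair.

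The final step is to contract with $\tilde a^{\rho\sigma}$ using the Sherman--Morrison expression above. For the symmetric piece, the contraction $\tilde a^{\rho\sigma}b_\sigma = b^\rho - \tfrac{b^2}{1+b^2}b^\rho = \tfrac{1}{1+b^2}b^\rho$ produces the first term in the claimed formula. For the antisymmetric pieces, the full $\tilde a^{\rho\lambda}$ survives because $b_\sigma$ is not there to be contracted, and this yields exactly the combination $\bigl(a^{\rho\lambda} - \tfrac{1}{1+b^2}b^\rho b^\lambda\bigr)(b_\mu\nabla_{[\lambda}b_{\nu]}+b_\nu\nabla_{[\lambda}b_{\mu]})$. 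No obstacle is anticipated here; the main thing to watch is bookkeeping of signs from the identity $\nabla_{[\mu}b_{\sigma]}=-\nabla_{[\sigma}b_{\mu]}$ when matching the indices to the formula as stated.
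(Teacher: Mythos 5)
Your proposal is correct, and it reaches the paper's formula by a mildly different route. The paper works directly in coordinates: it writes $\widetilde\Gamma_{\lambda\mu\nu}=\tfrac{1}{2}(\partial_\mu\tilde a_{\lambda\nu}+\partial_\nu\tilde a_{\mu\lambda}-\partial_\lambda\tilde a_{\mu\nu})=\Gamma_{\lambda\mu\nu}+b_\lambda\partial_{(\mu}b_{\nu)}-b_\mu\partial_{[\lambda}b_{\nu]}-b_\nu\partial_{[\lambda}b_{\mu]}$, raises the index with the same inverse $\tilde a^{\rho\lambda}=a^{\rho\lambda}-\tfrac{1}{1+b^2}b^\rho b^\lambda$ you obtain from Sherman--Morrison, then has to combine the leftover $-\tfrac{1}{1+b^2}b^\rho b_\lambda\Gamma^\lambda_{\mu\nu}$ term by hand with the symmetric partial-derivative piece to assemble $\tfrac{1}{1+b^2}b^\rho\nabla_{(\mu}b_{\nu)}$, and finally invokes $\partial_{[\lambda}b_{\nu]}=\nabla_{[\lambda}b_{\nu]}$ to covariantize the antisymmetric pieces. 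You instead start from the standard difference-tensor identity $\widetilde\Gamma^\rho_{\mu\nu}-\Gamma^\rho_{\mu\nu}=\tfrac{1}{2}\tilde a^{\rho\sigma}(\nabla_\mu\tilde a_{\sigma\nu}+\nabla_\nu\tilde a_{\sigma\mu}-\nabla_\sigma\tilde a_{\mu\nu})$, which, together with $\nabla a=0$, makes every term covariant from the outset; the regrouping into $2b_\sigma\nabla_{(\mu}b_{\nu)}-2b_\nu\nabla_{[\sigma}b_{\mu]}-2b_\mu\nabla_{[\sigma}b_{\nu]}$ and the contractions $\tilde a^{\rho\sigma}b_\sigma=\tfrac{1}{1+b^2}b^\rho$ then deliver the result without the two bookkeeping steps the paper needs. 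The computational content is the same, but your organization is cleaner and manifestly tensorial; the only thing to be careful about, as you note, is the sign convention in $\nabla_{[\lambda}b_{\nu]}$ (the paper itself is slightly inconsistent about the factor of $\tfrac{1}{2}$ in its antisymmetrization in the appendix, though its main-text convention makes everything consistent).
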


We prove this below, but first we point out the following immediate consequence.

\begin{cor}
If $b_\mu$ is covariantly constant with respect to $a_{\mu\nu}$, the affine structure of $\tilde a_{\mu\nu}$ is the same as the affine structure of $a_{\mu\nu}$, i.e. $\widetilde\Gamma^\rho_{\mu\nu} = \Gamma^\rho_{\mu\nu}$.
\end{cor}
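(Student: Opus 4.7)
The plan is to compute $\widetilde\Gamma^\rho_{\mu\nu} = \tfrac{1}{2}\tilde a^{\rho\sigma}(\partial_\mu \tilde a_{\nu\sigma} + \partial_\nu \tilde a_{\mu\sigma} - \partial_\sigma \tilde a_{\mu\nu})$ directly from the definition, relating everything back to the Levi-Civita connection $\nabla$ of $a_{\mu\nu}$ and its Christoffel symbols $\Gamma^\rho_{\mu\nu}$. First I would invert $\tilde a_{\mu\nu} = a_{\mu\nu} + b_\mu b_\nu$. Because this is a rank-one update of $a_{\mu\nu}$, the Sherman--Morrison identity immediately gives
\begin{align}
\tilde a^{\rho\sigma} = a^{\rho\sigma} - \frac{b^\rho b^\sigma}{1+b^2},
\end{align}
which is well-defined by the standing assumption $b^2 > -1$. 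As a useful byproduct, contracting with $b_\sigma$ yields $\tilde a^{\rho\sigma} b_\sigma = b^\rho/(1+b^2)$, a formula I will use repeatedly at the end.

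Next I would split the combination of partial derivatives into an $a$-piece and a $b\otimes b$-piece. The $a$-piece gives the standard identity $\partial_\mu a_{\nu\sigma} + \partial_\nu a_{\mu\sigma} - \partial_\sigma a_{\mu\nu} = 2\Gamma^\lambda_{\mu\nu} a_{\lambda\sigma}$. For the $b\otimes b$-piece, I would expand using the Leibniz rule and then re-express every partial derivative of $b$ as $\partial_\mu b_\nu = \nabla_\mu b_\nu + \Gamma^\lambda_{\mu\nu} b_\lambda$. Grouping the three resulting terms so as to exhibit one symmetric pair and two antisymmetric pairs gives
\begin{align}
\partial_\mu(b_\nu b_\sigma) + \partial_\nu(b_\mu b_\sigma) - \partial_\sigma(b_\mu b_\nu) = 2 b_\sigma \nabla_{(\mu} b_{\nu)} + 2\Gamma^\lambda_{\mu\nu} b_\lambda b_\sigma + 2 b_\nu \nabla_{[\mu} b_{\sigma]} + 2 b_\mu \nabla_{[\nu} b_{\sigma]},
\end{align}
where the Christoffel contributions from the antisymmetric pairs cancel by the symmetry of $\Gamma^\lambda_{\mu\nu}$, while those from the symmetric pair survive. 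The key book-keeping step is noticing that the surviving Christoffel piece combines precisely with the $a$-piece to form $2\Gamma^\lambda_{\mu\nu}\tilde a_{\lambda\sigma}$.

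Contracting with $\tfrac{1}{2}\tilde a^{\rho\sigma}$ then produces $\Gamma^\rho_{\mu\nu}$ plus the correction
\begin{align}
\tilde a^{\rho\sigma}\bigl( b_\sigma \nabla_{(\mu} b_{\nu)} + b_\nu \nabla_{[\mu} b_{\sigma]} + b_\mu \nabla_{[\nu} b_{\sigma]} \bigr),
\end{align}
and at this stage I invoke $\tilde a^{\rho\sigma} b_\sigma = b^\rho/(1+b^2)$ for the first term and the explicit form of $\tilde a^{\rho\sigma}$ for the other two. Using the antisymmetry $\nabla_{[\mu} b_{\sigma]} = -\nabla_{[\sigma} b_{\mu]}$ to rewrite the index placement generates the minus sign in the stated formula and yields exactly
\begin{align}
\widetilde\Gamma^\rho_{\mu\nu} = \Gamma^\rho_{\mu\nu} + \frac{1}{1+b^2} b^\rho \nabla_{(\mu} b_{\nu)} - \Bigl(a^{\rho\lambda} - \tfrac{1}{1+b^2} b^\rho b^\lambda\Bigr)\bigl( b_\mu \nabla_{[\lambda} b_{\nu]} + b_\nu \nabla_{[\lambda} b_{\mu]}\bigr).
\end{align}

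The computation is entirely algebraic, so no conceptual obstacle arises; the main thing to watch is the sign bookkeeping in the (anti)symmetrizations and making sure the dummy-index relabelling matches the target expression. The corollary is then immediate: if $\nabla_\mu b_\nu = 0$, both $\nabla_{(\mu} b_{\nu)}$ and $\nabla_{[\lambda} b_{\nu]}$ vanish, leaving $\widetilde\Gamma^\rho_{\mu\nu} = \Gamma^\rho_{\mu\nu}$.
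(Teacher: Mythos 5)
Your proposal is correct and follows essentially the same route as the paper: invert $\tilde a_{\mu\nu}$ via the rank-one update formula, compute the Christoffel symbols of $\tilde a$ directly from their definition, and track the (anti)symmetrized derivative terms until the stated formula emerges, after which covariant constancy of $b_\mu$ kills all corrections. The only cosmetic difference is that you convert partial derivatives to covariant ones before contracting with $\tilde a^{\rho\sigma}$ (so the $\Gamma^\lambda_{\mu\nu} b_\lambda b_\sigma$ term merges immediately into $\Gamma^\lambda_{\mu\nu}\tilde a_{\lambda\sigma}$), whereas the paper does this conversion at the very end; both yield the same result.
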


\begin{proof}
As long as $b^2>-1$ the formula for the determinant displayed above shows that $\tilde a_{\mu\nu} = a_{\mu\nu}+b_\mu b_\nu$ is invertible as a matrix. It can be easily checked that its inverse is given by 
\begin{align}
    \tilde a^{\mu\nu} =a^{\mu\nu} - \frac{1}{1+b^2}b^\mu b^\nu.
\end{align}
Unless otherwise specified (as in the case of $\widetilde\Gamma$ below!) indices are raises and lowered with $a_{\mu\nu}$. Denoting $\Gamma_{\lambda\mu\nu}=a_{\lambda\rho}\Gamma^\rho_{\mu\nu}$ and $\widetilde\Gamma_{\lambda\mu\nu}=\tilde a_{\lambda\rho}\widetilde\Gamma^\rho_{\mu\nu}$ we first note that we can express the latter as

\begin{align}
    \widetilde\Gamma_{\lambda\mu\nu} =\frac{1}{2} \left(
    \partial_\mu \tilde a_{\lambda\nu} + \partial_\nu \tilde a_{\mu\lambda} - \partial_\lambda \tilde a_{\mu\nu}\right) = 
\Gamma_{\lambda\mu\nu}  + b_\lambda\partial_{(\mu} b_{\nu)} - b_\mu\partial_{[\lambda} b_{\nu]} - b_\nu\partial_{[\lambda} b_{\mu]},
\end{align}
where $(\mu,\nu)$ denotes symmetrization and $[\mu,\nu]$ denotes anti-symmetrization. Therefore it follows that
\begin{align}
\widetilde\Gamma^\rho_{\mu\nu} &= \tilde a^{\rho\lambda}\widetilde\Gamma_{\lambda\mu\nu} = \left(a^{\rho\lambda} - \frac{1}{1+b^2}b^\rho b^\lambda\right)\left(\Gamma_{\lambda\mu\nu}  + b_\lambda\partial_{(\mu} b_{\nu)} -b_\mu\partial_{[\lambda} b_{\nu]} -b_\nu\partial_{[\lambda} b_{\mu]}\right) \\
&=\Gamma^\rho_{\mu\nu} - \frac{1}{1+b^2}b^\rho b_\lambda\Gamma^\lambda_{\mu\nu} 
  +\left(a^{\rho\lambda} - \frac{1}{1+b^2}b^\rho b^\lambda\right)b_\lambda\partial_{(\mu} b_{\nu)} - \left(a^{\rho\lambda} - \frac{1}{1+b^2}b^\rho b^\lambda\right)\left( b_\mu\partial_{[\lambda} b_{\nu]} + b_\nu\partial_{[\lambda} b_{\mu]}\right).
\end{align}
The second and third term add up to
\begin{align}
-\frac{1}{1+b^2}b^\rho & b_\lambda\Gamma^\lambda_{\mu\nu} +\left(a^{\rho\lambda} - \frac{1}{1+b^2}b^\rho b^\lambda\right)b_\lambda\partial_{(\mu} b_{\nu)} \\
&= -\frac{1}{1+b^2}b^\rho b_\lambda\Gamma^\lambda_{\mu\nu} +b^\rho\partial_{(\mu} b_{\nu)} - \frac{b^2}{1+b^2} b^\rho \partial_{(\mu} b_{\nu)} =\\
&=-\frac{1}{1+b^2}b^\rho b_\lambda\Gamma^\lambda_{\mu\nu} \ + \frac{1}{1+b^2} b^\rho \partial_{(\mu} b_{\nu)}\\
&=\frac{1}{1+b^2}b^\rho \left(\partial_{(\mu}b_{\nu)}-b_\lambda\Gamma^\lambda_{\mu\nu} 
\right)\\
&=\frac{1}{1+b^2}b^\rho \nabla_{(\mu} b_{\nu)}.
\end{align}
This shows that
\begin{align}
    \widetilde\Gamma^\rho_{\mu\nu} &= \Gamma^\rho_{\mu\nu} +\frac{1}{1+b^2}b^\rho \nabla_{(\mu}b_{\nu)}  - \left(a^{\rho\lambda} - \frac{1}{1+b^2}b^\rho b^\lambda\right)\left( b_\mu\partial_{[\lambda} b_{\nu]} + b_\nu\partial_{[\lambda} b_{\mu]}\right).
\end{align}
Finally, we may replace all partial derivatives with covariant ones because
\begin{align}
    \nabla_{[\lambda} b_{\nu]}= \nabla_\lambda b_\nu - \nabla_\nu b_\lambda = \partial_\lambda b_\nu - \Gamma^\mu_{\lambda\nu} b_\mu- \partial_\nu b_\lambda+\Gamma^\mu_{\nu\lambda} b_\mu = \partial_\lambda b_\nu- \partial_\nu b_\lambda  = \partial_{[\lambda} b_{\nu]}.
\end{align}
That yields the desired formula.
\end{proof}

\section{Determinant of a Not Necessarily Positive Definite $(\alpha,\beta)$-Metric}
\label{sec:ab_determinant}

Here we derive the formula Eq. \eqref{eq:ab_det_formula_main_text} for the determinant of a not necessarily positive definite \ab-metric, generalizing the well-known result from the positive definite case. More precisely, we consider Finsler metrics the form $F=\alpha\phi(s)$, where $s = \beta/\alpha$, $\alpha = \sqrt{|A|}=\sqrt{|a_{ij}y^i y^j|}$, $A = a_{ij}y^i y^j =  \text{sgn}(A)\alpha^2$, and where $a_{ij}$ is assumed to be a \mbox{(pseudo-)Riemannian} metric, i.e. not necessarily Riemannian/positive definite.\\

In complete analogy with the positive definite case, it can be shown by direct calculation that the fundamental tensor $g_{ij}\equiv \tfrac{1}{2}\bar\partial_i\bar\partial_j F^2$ is given by
\begin{align}\label{eq:alpha_beta_metric_fundamental_tensor}
g_{ij} = \sgn(A)\rho a_{ij} + \rho_0 b_i b_j + \rho_1(b_i\alpha_j + \alpha_i b_j) + \rho_2\alpha_i\alpha_j,
\end{align}
where we have defined $\alpha_i= a_{ij}y^j/\alpha$, and with coefficients given by
\begin{align}
\rho &= \phi(\phi-s\phi'),\\
\rho_0&= \phi \phi'' + \phi'\phi',\\
\rho_1 &=  -(s\rho_0 - \phi \phi') = -\left[s(\phi \phi'' + \phi'\phi') - \phi\phi '\right],\\
\rho_2 &=  -s\rho_1 = s\left[s(\phi \phi'' + \phi'\phi') - \phi\phi '\right].
\end{align}
The only difference here with the positive definite case is the factor sign$(A)$ appearing in the first term in Eq. \eqref{eq:alpha_beta_metric_fundamental_tensor}. Denoting $\dim M = n$ can write this in matrix notation as
\begin{align}
g = \sgn(A)\rho \left(a + UWV^T\right),
\end{align}
in terms of the three matrices
\begin{align}
W = \frac{\sgn(A)}{\rho}\mathbb I_{4\times 4},\qquad U = (\vec b, \vec b, \vec \alpha, \vec \alpha), \qquad V = (\rho_0 \vec b, \rho_1 \vec\alpha, \rho_1 \vec b, \rho_2 \vec\alpha).
\end{align}
$U$ and $V$ are both $n\times 4$ matrices. It is a well-known result (one of the matrix determinant lemmas, \note{see e.g. \cite{harville2008matrix}}) that assuming $a$ is an invertible matrix the determinant of the expression in brackets is equal to
\begin{align}
\det \left(a + UWV^T\right) = \det \left(\mathbb I_{4\times 4} + WV^Ta^{-1}U\right)\det a.
\end{align}
It follows that
\begin{align}
\det g = \sgn(A)^n\rho^n\det \left(\mathbb I_{4\times 4} + WV^Ta^{-1}U\right)\det a.
\end{align}
The matrix product $WV^Ta^{-1}U = \tfrac{\sgn(A)}{\rho}V^Ta^{-1}U$ can be evaluated by explicit computation and reads
\begin{align}
WV^Ta^{-1}U = \frac{\sgn(A)}{\rho}\left(
\begin{array}{cccc}
 b^2 \text{$\rho_0 $} & b^2 \text{$\rho_0$} & \sgn(A) s \text{$\rho_0$} & \sgn(A) s \text{$\rho_0$} \\
 \sgn(A) s \rho  & \sgn(A) s \rho  & \sgn(A) \rho  & \sgn(A) \rho  \\
 b^2 \text{$\rho_1$} & b^2 \text{$\rho_1$} & \sgn(A) s \text{$\rho_1$} & \sgn(A) s \text{$\rho_1$} \\
 \sgn(A) s \text{$\rho_2$} & \sgn(A) s \text{$\rho_2$} & \sgn(A) \text{$\rho_2$} & \sgn(A) \text{$\rho_2$} \\
\end{array}
\right).
\end{align}
Hence we obtain
\begin{align}
\det g&= \sgn(A)^n\rho^n \det\left(\mathbb I_{4\times 4}+\frac{\sgn(A)}{\rho}\left(
\begin{array}{cccc}
 b^2 \text{$\rho_0 $} & b^2 \text{$\rho_0$} & \sgn(A) s \text{$\rho_0$} & \sgn(A) s \text{$\rho_0$} \\
 \sgn(A) s \rho  & \sgn(A) s \rho  & \sgn(A) \rho  & \sgn(A) \rho  \\
 b^2 \text{$\rho_1$} & b^2 \text{$\rho_1$} & \sgn(A) s \text{$\rho_1$} & \sgn(A) s \text{$\rho_1$} \\
 \sgn(A) s \text{$\rho_2$} & \sgn(A) s \text{$\rho_2$} & \sgn(A) \text{$\rho_2$} & \sgn(A) \text{$\rho_2$} \\
\end{array}
\right)\right)\det a \\
&= \sgn(A)^n\rho^n \det\left(\mathbb I_{4\times 4}+\frac{1}{\rho}\left(
\begin{array}{cccc}
 \sgn(A) b^2 \text{$\rho_0 $} & \sgn(A) b^2 \text{$\rho_0$} &  s \text{$\rho_0$} & s \text{$\rho_0$} \\
  s \rho  &  s \rho  &  \rho  & \rho  \\
 \sgn(A) b^2 \text{$\rho_1$} & \sgn(A) b^2 \text{$\rho_1$} & s \text{$\rho_1$} & s \text{$\rho_1$} \\
s \text{$\rho_2$} & s \text{$\rho_2$} &  \text{$\rho_2$} &  \text{$\rho_2$} \\
\end{array}
\right)\right)\det a \\
&= \phi^{n+1}(\phi-s\phi')^{n-2}(\phi-s\phi' +  (\sgn(A) b^2-s^2)\phi'')\det a_{ij}.
\end{align}
Some useful identities that we have used are: $\alpha_i = \sgn(A) y_i/\alpha$ so that $\alpha_i\alpha^i = \sgn(A)$ and $\alpha_i b^i = \sgn(A) s$. We conclude that
\begin{align}
\boxed{
\det g_{ij} = \phi^{n+1}(\phi-s\phi')^{n-2}(\phi-s\phi' +  (\sgn(A) b^2-s^2)\phi'')\det a_{ij}.}
\end{align}

In the case that $\alpha$ is positive definite, sign$(A)= 1$ everywhere, so the formula reduces to the standard result (see e.g. \cite{ChernShen_RiemannFinsler}).



\bibliography{GeneralBib}

\end{document}